  \providecommand\BibTeX{{%
    \normalfont B\kern-0.5em{\scshape i\kern-0.25em b}\kern-0.8em\TeX}}}
\DeclareMathOperator{\E}{\mathbb{E}}
\newtheorem{theorem}{Theorem}[section]
\DeclareMathOperator{\xinput}{\mathbf{x}}
\begin{document}
\copyrightyear{2021}
\acmYear{2021}
% \setcopyright{rightsretained}
\acmConference[ACM CHIL '21]{ACM Conference on Health, Inference, and Learning}{April 8--10, 2021}{Virtual Event, USA}
\acmBooktitle{ACM Conference on Health, Inference, and Learning (ACM CHIL '21), April 8--10, 2021, Virtual Event, USA}\acmDOI{10.1145/3450439.3451857}
\acmISBN{978-1-4503-8359-2/21/04}

%%
%% The "title" command has an optional parameter,
%% allowing the author to define a "short title" to be used in page headers.
\title{RNA Alternative Splicing Prediction with  \\Discrete Compositional Energy Network}
% \title{Modeling Hierarchical Relationships in RNA Alternative Splicing with Discrete Compositional Energy Network}

%%
%% The "author" command and its associated commands are used to define
%% the authors and their affiliations.
%% Of note is the shared affiliation of the first two authors, and the
%% "authornote" and "authornotemark" commands
%% used to denote shared contribution to the research.
\author{Alvin Chan}
\authornote{Both authors contributed equally to this research.}
\email{guoweial001@e.ntu.edu.sg}
\affiliation{%
  \institution{Nanyang Technological University}
  \country{Singapore}
}

\author{Anna Korsakova}
\authornotemark[1]
\email{kors0001@e.ntu.edu.sg}
\affiliation{%
  \institution{Nanyang Technological University}
  \country{Singapore}
}

\author{Yew-Soon Ong}
\email{asysong@ntu.edu.sg}
\affiliation{%
  \institution{Nanyang Technological University}
  \country{Singapore}
}

\author{Fernaldo Richtia Winnerdy}
\email{fernaldo.winnerdy@ntu.edu.sg}
\affiliation{%
  \institution{Nanyang Technological University}
  \country{Singapore}
}

\author{Kah Wai Lim}
\email{kwlim@ntu.edu.sg}
\affiliation{%
  \institution{Nanyang Technological University}
  \country{Singapore}
}

\author{Anh Tuan Phan}
\email{phantuan@ntu.edu.sg}
\affiliation{%
  \institution{Nanyang Technological University}
  \country{Singapore}
}

%%
%% By default, the full list of authors will be used in the page
%% headers. Often, this list is too long, and will overlap
%% other information printed in the page headers. This command allows
%% the author to define a more concise list
%% of authors' names for this purpose.
% \renewcommand{\shortauthors}{Trovato and Tobin, et al.}

%%
%% The abstract is a short summary of the work to be presented in the
%% article.
\begin{abstract}
A single gene can encode for different protein versions through a process called alternative splicing. Since proteins play major roles in cellular functions, aberrant splicing profiles can result in a variety of diseases, including cancers. Alternative splicing is determined by the gene's primary sequence and other regulatory factors such as RNA-binding protein levels. With these as input, we formulate the prediction of RNA splicing as a regression task and build a new training dataset (CAPD) to benchmark learned models. We propose discrete compositional energy network (DCEN) which leverages the hierarchical relationships between splice sites, junctions and transcripts to approach this task. In the case of alternative splicing prediction, DCEN models mRNA transcript probabilities through its constituent splice junctions' energy values. These transcript probabilities are subsequently mapped to relative abundance values of key nucleotides and trained with ground-truth experimental measurements. Through our experiments on CAPD\footnote{Dataset available at: \texttt{https://doi.org/10.21979/N9/FFN0XH}}, we show that DCEN outperforms baselines and ablation variants.\footnote{Codes and models are released at: \texttt{https://github.com/alvinchangw/DCEN\_CHIL2021}}

\end{abstract}

%%
%% The code below is generated by the tool at http://dl.acm.org/ccs.cfm.
%% Please copy and paste the code instead of the example below.
%%
\begin{CCSXML}
<ccs2012>
   <concept>
       <concept_id>10010405.10010444.10010450</concept_id>
       <concept_desc>Applied computing~Bioinformatics</concept_desc>
       <concept_significance>500</concept_significance>
       </concept>
   <concept>
       <concept_id>10010405.10010444.10010449</concept_id>
       <concept_desc>Applied computing~Health informatics</concept_desc>
       <concept_significance>500</concept_significance>
       </concept>
   <concept>
       <concept_id>10010147.10010257.10010293.10010294</concept_id>
       <concept_desc>Computing methodologies~Neural networks</concept_desc>
       <concept_significance>500</concept_significance>
       </concept>
 </ccs2012>
\end{CCSXML}

\ccsdesc[500]{Applied computing~Bioinformatics}
\ccsdesc[500]{Applied computing~Health informatics}
\ccsdesc[500]{Computing methodologies~Neural networks}

%%
%% Keywords. The author(s) should pick words that accurately describe
%% the work being presented. Separate the keywords with commas.
\keywords{energy-based models, machine learning, deep neural networks, splicing prediction}

% %% A "teaser" image appears between the author and affiliation
% %% information and the body of the document, and typically spans the
% %% page.
% \begin{teaserfigure}
%   \includegraphics[width=\textwidth]{}
%   \caption{Seattle Mariners at Spring Training, 2010.}
%   \Description{Enjoying the baseball game from the third-base
%   seats. Ichiro Suzuki preparing to bat.}
%   \label{fig:teaser}
% \end{teaserfigure}

%%
%% This command processes the author and affiliation and title
%% information and builds the first part of the formatted document.
\maketitle

\section{Introduction}
RNA plays a key role in the human body and other organisms as a precursor of proteins. RNA alternative splicing (AS) is a process where a single gene may encode for more than one protein isoforms (or mRNA transcripts) by removing selected regions in the initial pre-mRNA sequence. In the human genome, up to 94\% of genes undergo alternative splicing \citep{Wang2008}. AS not only serves as a regulatory mechanism for controlling levels of protein isoforms suitable for different tissue types but is also responsible for many biological states involved in disease \citep{Tazi2009}, cell development and differentiation \citep{Gallego-Paez2017}. While advances in RNA-sequencing technologies \citep{Bryant2012} have made quantification of AS in patients' tissues more accessible, an AS prediction model will alleviate the burden from experimental RNA profiling and open doors for more scalable \emph{in-silico} studies of alternatively spliced genes.

Previous studies on AS prediction mostly either approach it as a classification task or study a subset of AS scenarios. Training models that can predict strengths of AS in a continuous range and are applicable for all AS cases across the genome would allow wider applications in studying AS and factors affecting this important biological mechanism. To this end, we propose AS prediction as a regression task and curate Context Augmented Psi Dataset (CAPD) to benchmark learned models. CAPD is constructed using high-quality transcript counts from the ARCHS4 database \citep{Lachmann2018}. The data in CAPD encompass genes from all 23 pairs of human chromosomes and include 14 tissue types. In this regression task, given inputs such as the gene sequence and an array of tissue-wide RNA regulatory factors, a model would predict, for key positions on the gene sequence, these positions' relative abundance ($\psi$) in the mRNAs found in a patient's tissue.

Each mRNA transcript may contain one or more splice junctions, locations where splicing occurred in the gene sequence. We hypothesize that a model design that considers these splice junctions would be key for good AS prediction. More specifically, these splice junctions are produced through a series of molecular processes during splicing, so modeling the \text{energy} involved at each junction may offer an avenue to model the whole AS process. This is the intuition behind our proposed discrete compositional energy network (DCEN) which models the energy of each splice junction, \textbf{composes} candidate mRNA transcripts' energy values through their constituent splice junctions and predicts the transcript probabilities. The final component maps transcript probabilities to each known exon start and end's $\psi$ values. DCEN is trained end-to-end in our experiments with ground-truth $\psi$ labels. Through our experiments on CAPD, DCEN outperforms other baselines that lack the hierarchical design to model relationships between splice site, junctions and transcript. To the best of our knowledge, DCEN is the first approach to model the AS process through this hierarchical design. While DCEN is evaluated on AS prediction here, it can potentially be used for other applications where the compositionality of objects (e.g., splice junctions/transcripts) applies. All in all, the prime contributions of our paper are as follows:
% \newpage
\begin{itemize}
    \item We construct Context Augmented Psi Dataset (CAPD) to serve as a benchmark for machine learning models in alternative splicing (AS) prediction.
    \item To predict AS outcomes, we propose discrete compositional energy network (DCEN) to output $\psi$ by modeling transcript probabilities and energy levels through their constituent splice junctions.
    \item Through experiments on CAPD, we show that DCEN outperforms baselines and ablation variants in AS outcome prediction, generalizing to genes from withheld chromosomes and of much larger lengths.
\end{itemize}

\section{Background: RNA Alternative Splicing} \label{sec:background}
RNA alternative splicing (AS) is a process where a single gene (DNA / pre-mRNA) can produce multiple mRNAs, and consequently proteins, increasing the biodiversity of proteins encoded by the human genome. Pre-mRNAs contain two kinds of nucleotide segments, introns and exons. Each post-splicing mRNA transcript would only have a subset of exons while the introns and remaining exons are removed. A molecular machine called spliceosome joins the upstream exon's end with the downstream exon's start nucleotides to form a splice junction and removes the intronic segment between these two sites. In an example of an exon-skipping AS event in Figure~\ref{fig:background}, a single pre-mRNA molecule can be spliced into more than one possible mRNA transcripts ($T_{\alpha}$, $T_{\beta}$) with different probabilities ($P_{T_{\alpha}}, P_{T_{\beta}} \in [0,1]$). These probabilities are largely determined by local features surrounding the splice sites (exon starts/ends) such as the presence of key motifs on the exonic and intronic regions surrounding the splice sites nucleotide. Global contextual regulatory factors such as RNA-binding proteins and small molecular signals \citep{Witten2011,Boo2020,Taladriz-Sender2019} can also influence the transcript probabilities, creating variability for AS outcomes in cells from different tissue types or patients. While exon-skipping is the most common form of AS, there are others such as alternative exon start/end positions and intron retention.

% \paragraph{Measurement of Alternative Splicing Outcome}
\subsection{Measurement of Alternative Splicing Outcome} One standard way to quantify AS outcome for a group of cells is through \textbf{percent spliced-in} ($\psi \in [0,1]$). Essentially, $\psi$ is defined as the ratio of relative abundance of an exon over all mRNA products within a single gene. An exon with $\psi = 1$ means that it is included in all mRNAs found from experimental RNA-sequencing measurements while $\psi = 0$ means the exon is missing in that particular gene. $\psi$ can also be annotated onto exon's key positions such as its start and end locations. This allows one to approach the AS prediction as a regression task of predicting $\psi$ for each nucleotide of interest.

\begin{figure}[!htbp]
    \centering
    \includegraphics[width=\linewidth]{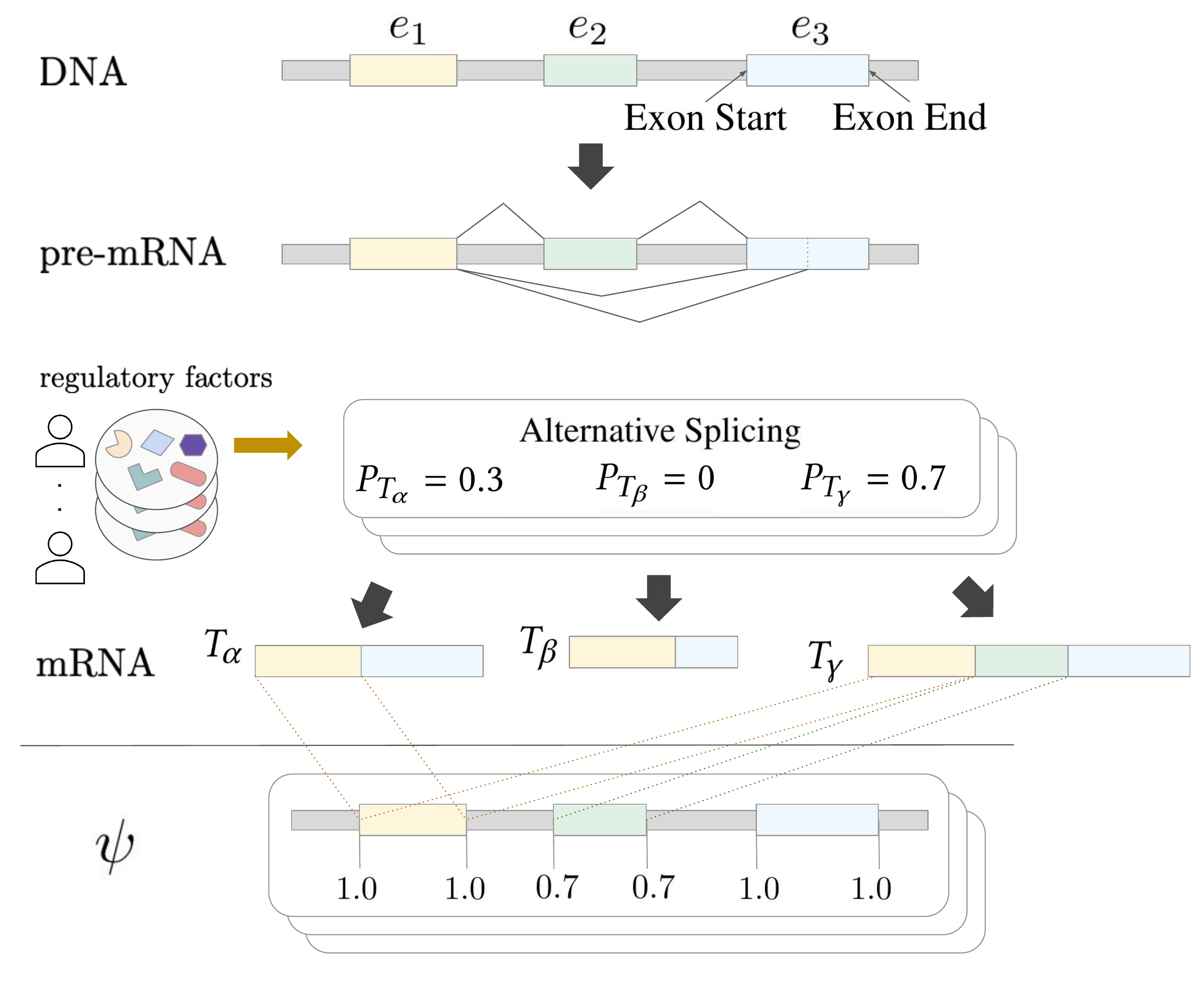}
    \caption{Mechanism of alternative splicing and its relationship with $\psi$ annotations. Introns in the gene sequences are colored gray while exons ($e_1$, $e_2$, $e_3$) are colored otherwise. Each patient sample has an unique set of regulatory factors, leading to a variety of alternative splicing outcomes in the population. In the training of DCEN, the DNA sequences of patients are assumed to be the same.}
    \label{fig:background}
\end{figure}

\section{Related Work}
We review prior art on RNA splicing prediction and energy-based models, highlighting those most similar to our work.

% \paragraph{Splice Site Classification}
\subsection{Splice Site Classification}
The earliest task of machine learning on RNA splicing involves classification of splicing sites such as exon start and end positions in a given gene sequence, first using models such as decision trees \citep{Pertea2001} and support vector machines \citep{Degroeve2005}. As deep learning gains wider adoption, a line of works uses neural networks for splice site prediction from raw sequence \citep{Zuallaert2018, Zhang2018, Louadi2019, Jaganathan2019}. In a recent example, \cite{Jaganathan2019} used a 1-D Resnet model to classify individual nucleotides in a pre-mRNA sequence into 3 categories: 1) exon's start, 2) exon's end or 3) none of the two classes. Unlike these models that only classify splice sites, we propose DCEN to predict $\psi$ levels of splice sites which involve the consideration of patient-specific input such as levels of RNA regulatory factors on top of just primary gene sequences.

% \paragraph{Alternative Splicing Prediction}
\subsection{Alternative Splicing Prediction}
The prior work in alternative splicing prediction can be categorized into two distinct groups. The first group framed the prediction as a classification task, whether an alternative splicing event would occur given input or change in input. The earliest examples involved using a Bayesian regression \citep{Barash2010} and Bayesian neural network \citep{Xiong2011} to predict whether an exon would be skipped or included in a transcript. \cite{Leung2014} used a neural network with dense layers to predict the type of AS event. Using a classification framework \citep{Xiong2015} to predict one of three classes (high/medium/low), the relative $\psi$ value can also be inferred. Another deep learning-based approach \citep{Louadi2019} utilized a CNN-based framework to classify between four AS event classes (exon skipping, alternative 3', alternative 5' or constitutive exon). 
% The prior work in alternative splicing prediction can be categorized into two distinct groups. The first group framed the prediction as a classification task, whether an alternative splicing event would occur given input or change in input. The earliest examples involved using a Bayesian regression \citep{Barash2010} and Bayesian neural network \citep{Xiong2011} to predict whether an exon would be skipped or included in a transcript given the presence or absence of 1014 selected gene motifs around the exon neighborhood as input. \cite{Leung2014} used a neural network with dense layers to predict the type of AS event using a list of 1393 features derived from sequences in the alternative exon, flanking constitutive exons and introns between these exons. Another deep learning-based approach \citep{Louadi2019} utilized a CNN-based framework to classify between four AS event classes (exon skipping, alternative 3', alternative 5' or constitutive exon) given the RNA sequences from the neighborhood (140-nt) around the exon/intron junctions. 

The second group, which includes DCEN, addresses the prediction as a regression rather than a classification task. This formulation gives higher resolution in the AS event since predicted values correlate with the strength of the AS outcome. \cite{Bretschneider2018} proposed a deep learning model to predict which site is most likely to be spliced given the raw sequence input of 80-nt around the site. The neural networks in \citep{cheng2019mmsplice,cheng2020mtsplice} also use the primary sequences around candidate splice sites as inputs to infer their $\psi$ values. These approaches predict $\psi$ values of splice sites using only their individual representations without modeling the relationship between these splice sites and their parent transcripts, and are conceptually similar to the 2000-nt SpliceAI baseline here which is outperformed by DCEN in our experiments.

Since cellular signals such as RNA-binding proteins (RBPs) are observed to affect RNA splicing \citep{Witten2011, Yee2019}, models such as \cite{jha2017integrative} predicts $\psi$ values with RBP information and generated genomic features used as input. while \cite{Huang2017,Zhang2019} have emerged to incorporate both primary sequence features and RBP levels to better predict exon inclusion levels given a small number of experimental read counts. While also considering regulatory factors such as RBPs, our approach differs from \cite{Witten2011, Yee2019} as we do not assume the availability of experimental read counts for the gene of interest. To the best of our knowledge, DCEN is the first approach to model whole transcript constructs (through energy levels) on top of the immediate neighborhood around the nucleotide of interest when predicting its $\psi$ value in the splicing process.

% \paragraph{Energy-Based Models}
\subsection{Energy-Based Models}
Most recent work in energy-based models (EBM) \citep{lecun2006tutorial} focused on the application of image generative modeling. Neural networks were trained to assign low energy to real samples \citep{xie2016theory,song2018learning,du2019implicit,nijkamp2019learning,grathwohl2019your} so that realistic-looking samples can be sampled from the low-energy regions of the EBM's energy landscape. Instead of synthesizing new samples, our goal here is to predict RNA splicing outcomes. Other applications of EBMs include anomaly detection \citep{song2018learning}, protein conformation prediction \citep{du2020energy} and reinforcement learning \citep{haarnoja2017reinforcement}. Previous compositional EBMs such as \cite{haarnoja2017reinforcement,du2019compositional} considered high dimensional continuous spaces in their applications which makes sampling from the model intractable. In contrast, since genes consist of a finite number of known transcripts, DCEN considers discrete space where the probabilities of the transcripts are tractable through importance sampling with a uniform distribution.

\section{CAPD: Context Augmented Psi Dataset} \label{sec:data}
The core aim of Context Augmented Psi Dataset (CAPD) is to construct matching pairs of sample-specific inputs and labels to frame the alternative splicing prediction as a regression task and facilitate future benchmarking of machine learning splicing prediction models. Each CAPD sample is a unique AS profile of a gene from the cells of a particular tissue type, from an individual patient. Its annotations contain $\psi \in [0,1]$ of all the know exon starts and ends for the particular gene. Apart from the $\psi$ labels, each data sample also contains the following as inputs: a) full sequence of the gene ($\xinput$), b) nucleotide positions of all the known transcripts ($\mathcal{T}$) on the full gene sequence and c) levels of RNA-regulatory factors ($\xinput_{\text{reg}}$).

\begin{figure*}[!htbp]
    \centering
    \includegraphics[width=\textwidth]{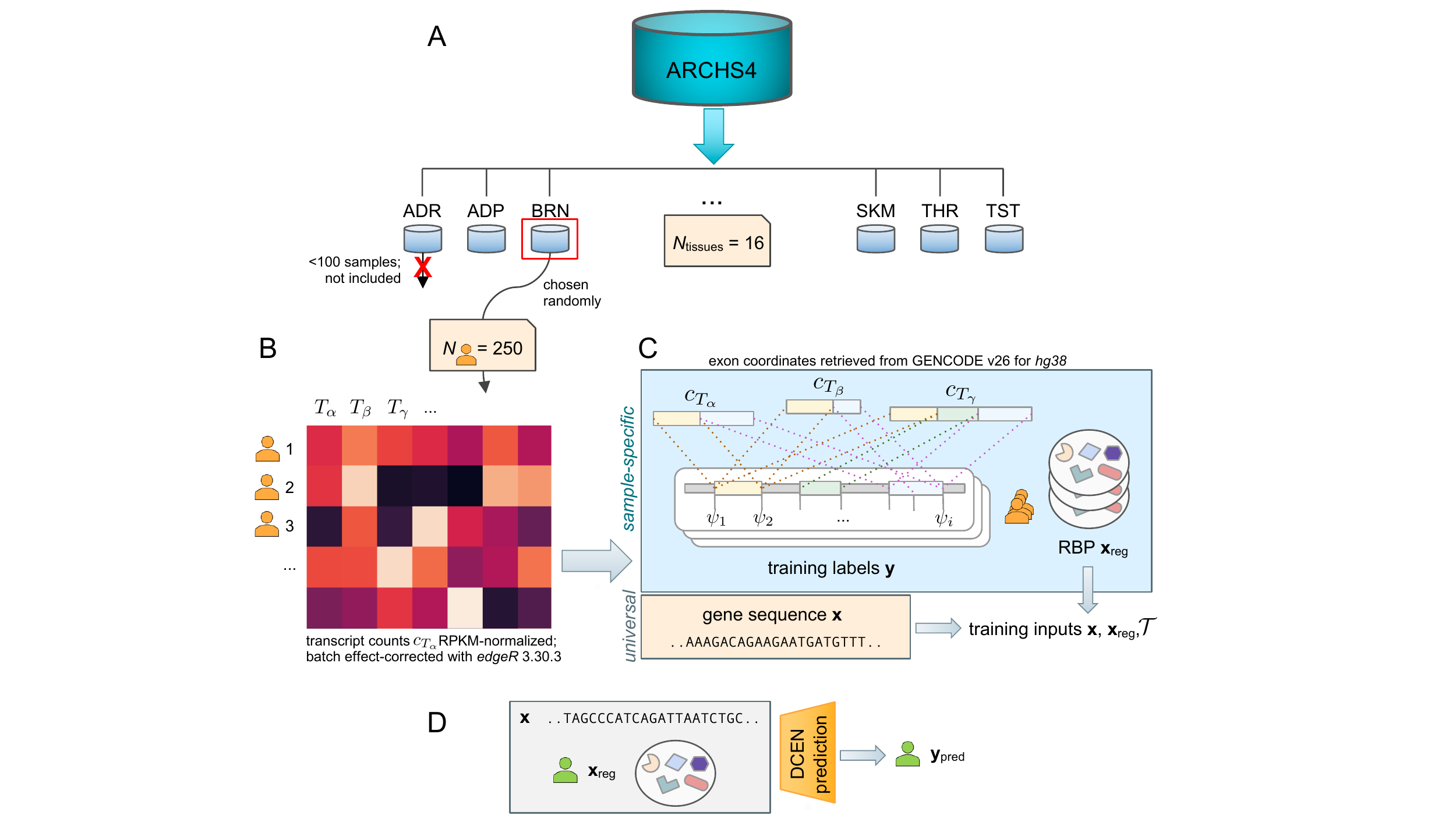}
    \caption{Pipeline of the CAPD dataset preparation with the steps as follows: A) a simple keyword search in the ARCHS4 v.8 database is performed to find a given number of samples for the specified tissues. Tissues for which the number of samples found is less than a specified threshold (N=100) are not included. B) Transcript libraries are normalized to RPKM, processing by \textit{edgeR} library and outlier removal are performed. C) Normalized transcript counts are then used to construct training labels \textbf{y} and auxiliary regulatory vector inputs \textbf{x$_{reg}$} for each sample. The latter is then used together with pre-mRNA transcript sequence as model input. D) After model training, a sample input for $\psi$ prediction consists of pre-mRNA transcript sequence and a vector of abundance of sample-specific regulatory factors.}
    \label{fig:pipeline}
\end{figure*}

% \paragraph{Construction of CAPD} 
\subsection{Construction of CAPD} 
We mine transcript abundance data from the publicly available ARCHS4 database v.8 \citep{Lachmann2018}. ARCHS4 database v.8 contains expression data for 238522 publicly available human RNA-seq samples that were retrieved from Gene Expression Omnibus (GEO) and aligned to human transcriptome Ensembl 90 cDNA \citep{Zerbino2018} to produce count numbers for each transcript in each sample. A simple keyword search was used to find $250 \times N_T$ samples for each of the 16 tissue types ($N_T = 16$) selected for training: adipose tissue, blood, brain, breast, colon, heart, kidney, liver, lung, lymph node, prostate gland, skeletal muscle tissue, testes, thyroid gland. Underrepresented tissues, i.e. with the number of samples below a threshold ($N = 100$) were not included (see the pipeline in Figure~\ref{fig:pipeline}A). Each sample row contains transcript-centric raw RNA counts. Standard normalization of RNA read counts to RPKM \citep{Mortazavi2008} was then performed. To exclude samples with significantly different expression patterns, a z-score outlier removal procedure described in \citep{Oldham2008} for similar tasks was applied to samples from each tissue separately. Expression patterns of samples belonging to one tissue type are varying, and $\approx 2- 5\%$ of the sample population is usually considered outlying by the algorithm. To homogenize the data, batch effects were removed with \textit{edgeR} R library \citep{Robinson2009} (Figure~\ref{fig:pipeline}B). Heterogeneity of the data, however, is still expected as the samples belong to different sources and were chosen randomly. This gives the normalized transcript count values ($c_{T_{\alpha}}$) for each gene transcript ($T_{\alpha}$).

To construct the levels of RNA splicing regulatory factors ($\xinput_{\text{reg}}$), we extract the expression levels of RBPs corresponding to the RBPDB database \citep{Cook2011} and RNA chemically modifying corresponding to \cite{Basturea2013} in a sample-wise manner from the normalized transcript count matrix (Figure~\ref{fig:pipeline}C). This gives a 3971-dimensional $\xinput_{\text{reg}}$. These, together with universal pre-mRNA transcript sequences (same for each sample), compose model input.

To generate primary pre-mRNA transcript sequences, we follow the procedure described in \cite{Jaganathan2019}: pre-mRNA (synonymous to DNA, with T $\to$ U) sequences are extracted with flanking ends of 1000 nt on each side, while intergenic sequences are discarded. Pseudogenes, genes with sequence assembly gaps and genes with paralogs are excluded from the data. Exon coordinate information is retrieved from GENCODE Release 26 for GRCh38 \citep{Frankish2019} comprehensive set, downloaded from the UCSC table browser \footnote{https://genome.ucsc.edu/cgi-bin/hgTables}. We omit genes with missing matching GENCODE ID, resulting in a total of 19399 unique human gene sequences. The coordinates of the gene pre-mRNA sequence start and end are determined by the left- or right-most position among all the transcripts for that gene, further extended with flanking ends of 1000 nt for context on each side. The CAPD dataset is split into train and test according to the chromosome number and length of the genes. All genes in chromosome 1, 3, 5, 7, and 9 are withheld as test samples (Test-Chr), similar to \cite{Jaganathan2019}. To test that models can generalize to genes of longer lengths, we further withhold all genes from the other chromosomes that have $>$100K nt (Test-Long) and group them in the test set. The remaining genes are used for training. Key statistics of the CAPD is summarized in Table~\ref{tab:CAPD stats}.

% \paragraph{Label Annotation} 
\subsection{Label Annotation} 
The $\psi$ labels are constructed as follows (Figure~\ref{fig:pipeline}C): 1) the count values $c_i$ for all known exon start and end (position $i$) are initialized to zero. 2) enumerating through all transcripts, each transcript count is added onto the counts of its constituent exons' start/end  $\mathbf{c}_i \gets \mathbf{c}_i + c_{T_{m}}, \forall i \in T_{m}$. 3) To compute $\psi$ values, each count value is divided by the sum of transcript counts to normalize its value to $[0,1]$, i.e., $\psi_i = \mathbf{c}_i / \sum_m c_{T_{m}}$ .

\begin{table}[!htbp]
    \centering
    % \footnotesize
    \caption{CAPD data statistics.}
        \begin{tabular}{ l|ccc }
         ~ & Train & Test-Chr & Test-Long \\
         \hline
         \# of unique genes & 11,472 & 5,604 & 2,323 \\
         \hline
         mean pre-mRNA length (nt) & 26,196 & 73,355 & 247,041 \\
         \hline
         mean \# of exons & 6.7 & 7.0 & 10.5 \\
        \end{tabular}
\label{tab:CAPD stats}
\end{table}

% \begin{wrapfigure}{r}{0.49\textwidth}
%   \begin{center}
%     \includegraphics[width=0.48\textwidth]{}
%   \end{center}
% \end{wrapfigure}

\section{DCEN: Discrete Compositional Energy Network}
In \S~\ref{sec:background}, we learn that final mRNA splice isoforms may comprise one or more splice junctions, points where upstream exon's end and downstream exon's start meet. Inspired by the creation of splice junctions by spliceosomes, the first stage of DCEN models the energy values of the splicing process at splice junctions. As the splice junctions are typically far enough ($\sim$300 nt) and assumed to be independent of one another, the energy of a final mRNA transcript can be composed by the summation of its constituent splice junctions' energy. The second stage of DCEN derives probabilities for the formation of each transcript from their energy values. These transcript probabilities are then mapped to the relative abundance ($\psi$) of exon starts/ends at its corresponding splice sites. Since a particular splice junction may appear in more than one mRNA isoform, we design DCEN to be invariant to the splice junctions. In the following \S~\ref{sec:model architecture}, we discuss the key components of DCEN while \S~\ref{sec:training algorithm} details its training process.

\subsection{Model Architecture} \label{sec:model architecture}
Here, we detail the DCEN learned energy functions and how transcript probabilities can be derived from their energy levels through Boltzmann distribution and importance sampling. A summary of DCEN model architecture is shown in Figure~\ref{fig:DCEN}.

% + Here we detail DCEN's learned energy function and how transcript probabilities can be derived from their energy levels. \\
% + By mapping transcript probabilities into exon start \& end levels, we further discuss how to train DCEN in an end-to-end matter from the exon inclusion level dataset \\

% +From the transcripts' energy level, we can derive the probabilities of these transcripts (\S~\ref{sec:transcript probabilities from energy levels}) through Boltzmann distribution and importance sampling. 

\begin{figure}[!htbp]
    \centering
    \includegraphics[width=\linewidth]{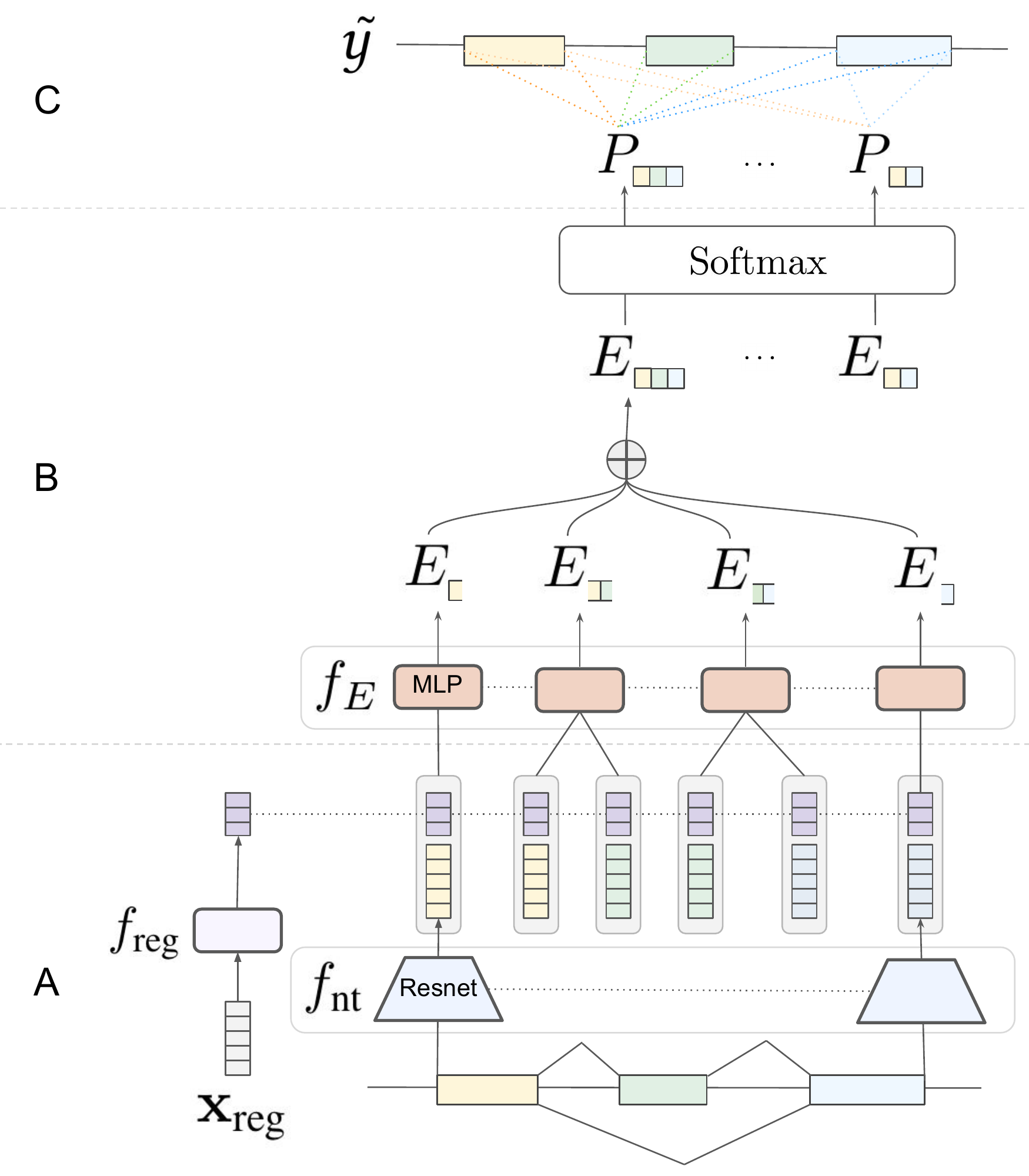}
    \caption{Model architecture of proposed discrete compositional energy network (DCEN). A) Learned representations of the DNA sequence are extracted by $f_{\text{nt}}$ while the representations of the regulatory factors by $f_{\text{reg}}$. B) The representation of a particular splice junction is the concatenation of acceptor and donor sites' representations. The learned energy value of a particular splice junction is computed by passing its representation through $f_{E}$. The energy level of each transcript is the sum of its junctions' energies. C) Through a softmax operation, the probability of each transcript in a gene is computed. By pooling the probabilities of all transcripts that contain a particular nucleotide, we get the nucleotide's predicted $\psi$ value.}
    \label{fig:DCEN}
\end{figure}

% \paragraph{Learned Energy Functions}
\subsection{Learned Energy Functions} \label{sec:transcript probabilities from energy levels}
The weights of DCEN's learned energy functions consist of a 1) feature extractor, 2) regulatory factors encoder and 3) junction energy network. The feature extractor ($f_{\text{nt}}$) takes the pre-mRNA sequence of length $l$ as its input ($\xinput \in \mathbb{R}^{l \times 4}$) where 4 is the number of possible nucleotides and outputs a hidden representation ($ \mathbf{h}_{\text{nt}} \in \mathbb{R}^{l \times d}$) for each nucleotide position ($\xinput_i$) while the regulatory factors encoder ($f_{\text{reg}}$) takes in the levels of regulatory factors ($\xinput_{\text{reg}}$) to compute a gene-wide hidden states $\mathbf{h}_{\text{reg}} \in \mathbb{R}^d$:

\begin{equation}
    \mathbf{h}_{\text{nt}} = f_{\text{nt}}(\xinput) ~, ~~~~~ \mathbf{h}_{\text{reg}} = f_{\text{reg}} (\xinput_{\text{reg}})
\end{equation}

We concatenate $\mathbf{h}_{\text{reg}}$ to all the position-specific $\mathbf{h}_{\text{nt}}$ to form a new position-wise hidden state ($\mathbf{h}_i$) that is dependent on regulatory factors. The representation of a particular splice junction ($J_{k} = (i,j)$) is the concatenation between the hidden states of upstream exon end's and downstream exon start's hidden states:

% \begin{equation}
%     \mathbf{h}_i = [ \mathbf{h}_{\text{nt}}_i ; \mathbf{h}_{\text{reg}} ]
% \end{equation}

\begin{equation}
    \mathbf{h}_{J_{k}} = [ \mathbf{h}_i ; \mathbf{h}_j ] ~~, ~~~~~ \mathbf{h}_i = [ \mathbf{h}_{\text{nt}_i} ; \mathbf{h}_{\text{reg}} ]
\end{equation}
% \mathbf{h}_{J_{k}} 

If an exon start/end is the first/last nucleotide of a transcript, its hidden state is concatenated with a learned start/end token instead ($\mathbf{h}_\text{start}$ or $\mathbf{h}_\text{end}$ respectively). To model the energy ($E_{J_{k}} \in \mathbb{R}$) of producing splice junction $J_{k}$, we feed its representation into the energy network ($f_{E}$).
% \begin{equation}
%     E_{J_{k}} = f_{E} (\mathbf{h}_{J_{k}})
% \end{equation}
We sum up the energy values of all splice junctions ($J_{k}$) inside a mRNA transcript ($T_{\alpha}$) to compose the total energy ($E_{T_{\alpha}} \in \mathbb{R}$) involved in producing the transcript from a splicing event: 
\begin{equation}
    E_{T_{\alpha}} = \sum_k E_{J_{k}} ~~, ~~~~~ \forall J_{k} \in T_{\alpha} ~~, ~~~~~ E_{J_{k}} = f_{E} (\mathbf{h}_{J_{k}})
\end{equation}

% \paragraph{Transcript Probabilities from Energy Values}
\subsection{Transcript Probabilities from Energy Values} \label{sec:transcript probabilities from energy levels}
After obtaining the energy levels ($E_{T_{\alpha}}$) of all mRNA transcript candidates for a particular gene, we can compute the probabilities of these transcripts via a softmax operation through the theorem below.

\begin{theorem} \label{theorem:energy levels to probabilities}
Given the energy levels of all the possible discrete states of a system, the probability of a particular state $T_i$ is the softmax output of its energy $E_{T_i}$ with respect to those of all other possible states in the system, i.e.,

\begin{equation} \label{eq:energy levels to probabilities}
P_i = \frac{ \exp (- E_{T_i} ) }{ \sum_{j} \exp (- E_{T_j}  ) } = \mathrm{Softmax}_i( E )
\end{equation}

\end{theorem}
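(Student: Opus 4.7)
The plan is to derive the identity directly from the Boltzmann/Gibbs form that defines an energy-based model on a finite state space. For such a model the unnormalized likelihood of a state $T_i$ is $\exp(-E_{T_i})$, so there exists a positive normalizing constant $Z$ (the partition function) with $P_i = \exp(-E_{T_i})/Z$. First I would state this proportionality, appealing to the EBM definition used throughout the paper. Second I would enforce that the $P_i$ form a probability mass function on the finite set of transcripts, so $\sum_i P_i = 1$; solving this constraint uniquely fixes $Z = \sum_j \exp(-E_{T_j})$. Substituting back yields the stated softmax form.

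If a more first-principles derivation is preferred, the same identity is recovered from the maximum-entropy principle: among all distributions on the finite transcript set with a prescribed expected energy $\sum_i P_i E_{T_i}$, the Shannon entropy $-\sum_i P_i \log P_i$ is maximized at the Gibbs distribution. Writing the Lagrangian with multipliers for the normalization and expected-energy constraints, differentiating with respect to each $P_i$, and absorbing the constants (with the inverse temperature set to $1$, matching the convention used here) produces $P_i \propto \exp(-E_{T_i})$. Normalization then closes the argument as before.

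The main obstacle is essentially stylistic rather than technical. Finiteness of the transcript set removes every potentially delicate step: the partition function is a finite sum so convergence is automatic, no measurability conditions are needed, and the Lagrange-multiplier calculation involves only elementary differentiation. The real question is which of the two derivations above best matches the paper's level of detail. I would favour the first, since it is a one-line consequence of the EBM framework the paper has already committed to, and mention the max-entropy viewpoint only as a brief remark motivating why the Gibbs form (rather than any other monotone transformation of energies) is the natural choice.
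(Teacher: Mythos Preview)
Your proposal is correct, and its core is the same as the paper's: start from the Boltzmann form $P_i \propto \exp(-E_{T_i})$ and fix the normalizer by $\sum_i P_i = 1$. The one genuine difference is that the paper's proof explicitly routes the computation of the partition function through importance sampling with a uniform proposal $q(x_i)=1/k$ over the $k$ discrete states, writing $Z=\frac{1}{k}\sum_i h(x_i)/(1/k)=\sum_i h(x_i)$, and the paper's in-text summary singles out ``Boltzmann distribution and importance sampling'' as the two ingredients. Your direct normalization argument reaches the same $Z=\sum_j \exp(-E_{T_j})$ in one line and is arguably cleaner; the importance-sampling step in the paper is essentially a re-derivation of what normalization already yields once the state space is finite, so nothing mathematical is lost by omitting it. Your optional maximum-entropy derivation is additional motivation the paper does not provide; it is sound but, as you note, not needed at the level of detail the paper uses.
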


Its proof, deferred to the Appendix \S~\ref{sec:proof}, can be derived through Boltzmann distribution and importance sampling. Since each mRNA transcript $T_{\alpha}$ can be interpreted as a discrete state of the alternative splicing event for a particular gene (system) in Theorem~\ref{theorem:energy levels to probabilities}, we can compute its probability from its energy value. 

It is important to also consider a null state with energy $E_{\text{null}}$ where none of the gene's mRNA transcripts is produced. In DCEN, $E_{\text{null}}$ is a learned parameter. In summary, the probability of producing transcript $T_{\alpha}$ in an gene splicing event is 
\begin{equation}
    P_{T_{\alpha}} = \mathrm{Softmax}_{\alpha} (E) ~~, ~~~~~ E = [E_{T_{\alpha}}, E_{T_{\beta}} , \dots, E_{\text{null}} ] 
\end{equation}

If the null state were not considered, the model would incorrectly assume that a particular gene is always transcribed since the sum of transcripts' probabilities in a gene would be $\sum_{i} P_{T_i} = 1$.

% \paragraph{Exon Start/End Inclusion Levels from Transcript Probabilities} 
\subsection{Exon Start/End Inclusion Levels from Transcript Probabilities} \label{sec:psi from transcript probs}
We can compute the probability of a particular (exon start/end) nucleotide of position $i$ by summing up the probabilities of all transcripts that contain that nucleotide.
\begin{equation}
 P_i = \sum_m P_{T_{m}} ~~, ~~ \forall T_{m} \in \mathcal{T}_{i} 
\end{equation}
where $\mathcal{T}_{i}$ is the set of transcripts containing the nucleotide of interest. By inferring the (exon start/end) nucleotide inclusion levels through transcript probabilities, our model has the advantage of additional access to the relative transcription level of the gene's transcripts over baselines that infer directly at the nucleotide positions (such as baselines in \S~\ref{sec:ablation}).

\subsection{Training Algorithm} \label{sec:training algorithm}
% \paragraph{Regression Loss} 
\subsubsection{Regression Loss} 
Since experimental $\psi$ levels from CAPD are essentially the empirical observations of nucleotide present in the final mRNA transcript products, its normalized values ($y_i \in [0,1]$) can be used as the ground-truth label for the predicted nucleotide inclusion levels. This allows us to train DCEN as a regression task by minimizing the mean squared error (MSE) between the predicted nucleotide $\psi$ and the normalized experimental $\psi$ values:

\begin{equation}
    L_{\psi} = \sum_i \| \tilde{y_i} - y_i \|^2_2 ~~, ~~~~~ \tilde{y_i} = P_i
\end{equation}

In our experiments, only nucleotide positions that are either an exon start or end are involved in this regression training objective.

% \paragraph{Classification Loss} 
\subsubsection{Classification Loss} 
We also include a classification objective, similar to \cite{Jaganathan2019}, where a classification head $f_{\text{cls}}$ takes the nucleotide hidden states ($\mathbf{h}_{\text{nt}}$) as input to predict probability of every nucleotide as one of the 3 classes (exon start, end and neither) to give the classification loss:
\begin{equation}
    L_{\text{cls}} = - \mathbf{y}_{\text{cls}}^\top \log f_{\text{cls}} (\mathbf{h}_{\text{nt}})
\end{equation}

where $\mathbf{y}_{\text{cls}}$ is the ground-truth labels for each nucleotide. This helps the DCEN learn features on the gene primary sequence that are important for RNA splicing. A summary of the training phase is shown in Algorithm~\ref{algo:DCEN Training}.

% f_{\text{nt}} ~~~~~~~~~
% f_{\text{reg}} ~~~~~~~~~
% \mathbf{x}_{\text{reg}}

\begin{algorithm}[!htbp]
% \footnotesize
 \caption{Discrete Compositional Energy Network Training}
 \label{algo:DCEN Training}

\textbf{Input:} Training data $\mathcal{D}_{\text{train}}$, Learning rate $\gamma$, 

\For{ each training iteration }{
 Sample $(\xinput, \xinput_{reg}, \mathbf{y}, \mathcal{T}, \mathcal{J} ) \sim \mathcal{D}_{\text{train}}$

 $\mathbf{h}_{\text{nt}} \gets f_{\text{nt}} (\xinput) , ~~$
 
 $L_{\text{cls}} \gets - \mathbf{y}_{\text{cls}}^\top \log f_{\text{cls}} (\mathbf{h}_{\text{nt}})$ \algorithmiccomment{ Compute exon start/end classification cross-entropy loss}
  
 $\mathbf{h}_{\text{reg}} \gets f_{\text{reg}} (\xinput_{reg}) $
 
 $\mathbf{h}_i \gets [ \mathbf{h}_{\text{nt}_i} ; \mathbf{h}_{\text{reg}} ] ~~$
 
 $\mathbf{h}_{J_{k}} \gets [ \mathbf{h}_i ; \mathbf{h}_j ] ~~,~~~$
 $J_{k} = (i,j)~~$ \algorithmiccomment{ Get junction state from upstream exon end $i$ and downstream exon start $j$}
 
 $E_{J_{k}} \gets f_E (\mathbf{h}_{J_{k}}) ~~$ \algorithmiccomment{ Compute splice junction energy}
 
 $E_{T_{\alpha}} \gets \sum_k E_{J_{k}} ~~, ~~ \forall J_{k} \in T_{\alpha}, ~~ \forall T_{\alpha} \in \mathcal{T} $ \algorithmiccomment{ Compose transcript energy}
 
 $P_{T_{\alpha}} \gets \mathrm{Softmax}_{\alpha} (E) ~~, ~~ E = [E_{T_{\alpha}}, E_{T_{\beta}} , \dots ] $ \algorithmiccomment{ Compute transcript probabilities}
 
 $\tilde{y_i} \gets \sum_m P_{T_{m}} ~~, ~~ \forall T_{m} \in \mathcal{T}_{i} $
 
 $L_{\psi} \gets \sum_i \| \tilde{y_i} - y_i \|^2_2 ~~ $ \algorithmiccomment{ Compute exon start/end inclusion regression loss}
 
 $\theta \gets \theta + \gamma ~ \nabla_{\theta} (\lambda_{\text{reg}} L_{\psi} +  \lambda_{\text{cls}} L_{\text{cls}} ) $ 
}
\end{algorithm}

% $$
% f_{\text{nt}} ~~~~~~~~~
% f_{\text{reg}} ~~~~~~~~~
% f_{E} ~~~~~~~~~
% E ~~ = ~~ E ~~ + ~~ E ~~~~~~~~~
% \mathbf{x}_{\text{reg}} ~~~~~~~
% \mathrm{Softmax} ~~~~~~
% P
% $$

\section{Experiments}
We evaluate DCEN and baselines on the prediction of the $\psi$ values of exon starts and ends in our new CAPD dataset. In the following, we describe the baseline models and DCEN ablation variants before discussing the experimental setup and how well these models generalize to withheld test samples.

% \paragraph{Baselines and Ablation Variants} 
\subsection{Baselines and Ablation Variants} 
\label{sec:ablation}
SpliceAI is a 1D convolutional Resnet \citep{he2016deep} trained to predict splice sites on pre-mRNA sequences. We train three variants of SpliceAI to compare as baselines in our experiments: the first (\textbf{SpliceAI-cls}, Figure~\ref{fig:spliceai_cls}) is trained only on the classification objective, similar to the original paper, to predict whether a nucleotide is an exon start, end or neither of them. The second (\textbf{SpliceAI-reg}, Figure~\ref{fig:spliceai_reg_ml}) is trained only on a regression objective like DCEN to directly predict the $\psi$ levels of nucleotides while the third variant (\textbf{SpliceAI-cls+reg}) is trained on both the classification and regression objectives. Both SpliceAI-reg and SpliceAI-cls+reg also have a regulatory factors encoder ($f_{\text{reg}}$) similar to DCEN's to compute $\mathbf{h}_i$ and a regression head ($f_{\psi}$) to output $\psi$ level for each nucleotide. For a direct comparison, DCEN's feature extractor $f_{\text{nt}}$ takes the same architecture as the SpliceAI Resnet. 
Two DCEN ablation variants are also evaluated: The \textbf{Junction-psi} model (Figure~\ref{fig:junction_psi}) predicts the psi levels of a particular splice junction directly rather than its energy level in the case of DCEN. A simpler ablation variant (\textbf{SpliceAI-ML} or SpliceAI-match layers, Figure~\ref{fig:spliceai_reg_ml}) substitutes DCEN's $f_E$ with a position-wise feedforward MLP containing the same number of parameters to verify that DCEN's better performance is not due to more learned parameters.

% + It is also used as the sequence feature extractor ($f_{\text{nt}}$) for the DCEN in this paper. \\
% + We train three variants of SpliceAI in our paper: the first (SpliceAI-cls) is trained only on a classification objective, similar to the original paper, to predict whether a nucleotide is an exon start, end or neither of them. The second (SpliceAI-reg) is trained only on a regression objective to directly predict the inclusion levels of each nucleotide while the third variant (SpliceAI-cls+reg) is trained on both the classification and regression objectives. \\
% + For the two DCEN ablation variants included in our experiments, the Junction-psi model predicts the inclusion levels of a particular splice junction directly rather than its energy level in DCEN. \\
% + Another simpler ablation variant (SpliceAI-match layers or SpliceAI-ML) substitutes $f_E$ with a position-wise feedforward MLP with the same number of parameters to verify that DCEN's better performance is not due to more learned parameters than the SpliceAI baselines. \\

% \lipsum[1-2]
% \subsection{Setup} \label{sec:setup}

% \paragraph{Data \& Models}
\subsection{Data \& Models}
We use the CAPD dataset (\S~\ref{sec:data}) for the training and evaluation of all models. 10\% of the CAPD training genes are randomly selected as the validation set for early-stopping while the rest are used as training samples for the models. For DCEN's $f_{\text{nt}}$ and the SpliceAI baselines, we follow the same setup as the SpliceAI-2K model in \cite{Jaganathan2019} which is a Resnet made up of 1-D convolutional layers with a perceptive window of 2K nucleotides, 1K on each flanking sides. The SpliceAI Resnet model has a total of 12 residual units and hidden states of size 32. The number of channels in $\mathbf{h}_{\text{reg}}$ and $\mathbf{h}_{\text{nt}}$ are 32 while $\mathbf{h}$ has a size of 64 channels. We use a 3-layer MLP for $f_{\text{reg}}$. 
The regression head $f_{\psi}$ in SpliceAI-reg and SpliceAI-cls+reg is a 3-layer MLP with a sigmoid activation to outputs a scalar $\psi$ value. DCEN's $f_E$ is a 4-layer MLP and outputs a scalar energy value for each splice junction. Intermediate hidden states of $f_{\text{reg}}$ and $f_E$ all have dimension of 32. 

% \paragraph{Training \& Evaluation}
\subsection{Training \& Evaluation}
All models are trained with Adam optimizer with a learning rate of 0.001 in our experiments. Due to the data's large size, the training is early-stopped when the model's validation performances plateau: less than 25\% of the full train dataset for all models in our experiments. For the training of SpliceAI models, samples are fed into the model with a batch size of 8 sequences with a maximum length of 7K nucleotides (5K labeled + 2K flanking). For training of DCEN and its ablation variants, a SpliceAI-cls\&reg model pretrained on CAPD was used as the weights of $f_{\text{reg}}$, $f_{\text{nt}}$ and only the parameters of $f_E$ is trained to reduce training time. In each training iteration of DCEN and its ablation variants, a batch of 16 genes was used to train the weights. We evaluate all the models on withheld test samples with two standard regression metrics: Spearman rank correlation and Pearson correlation. Pearson correlation measures the linear relationship between the ground-truth and predicted exon start/end inclusion levels while Spearman rank correlation is based on the ranked order of the prediction and ground-truth values.

\begin{figure*}[!htbp]
\begin{subfigure}{0.3\textwidth}
  \centering
  \includegraphics[width=\textwidth]{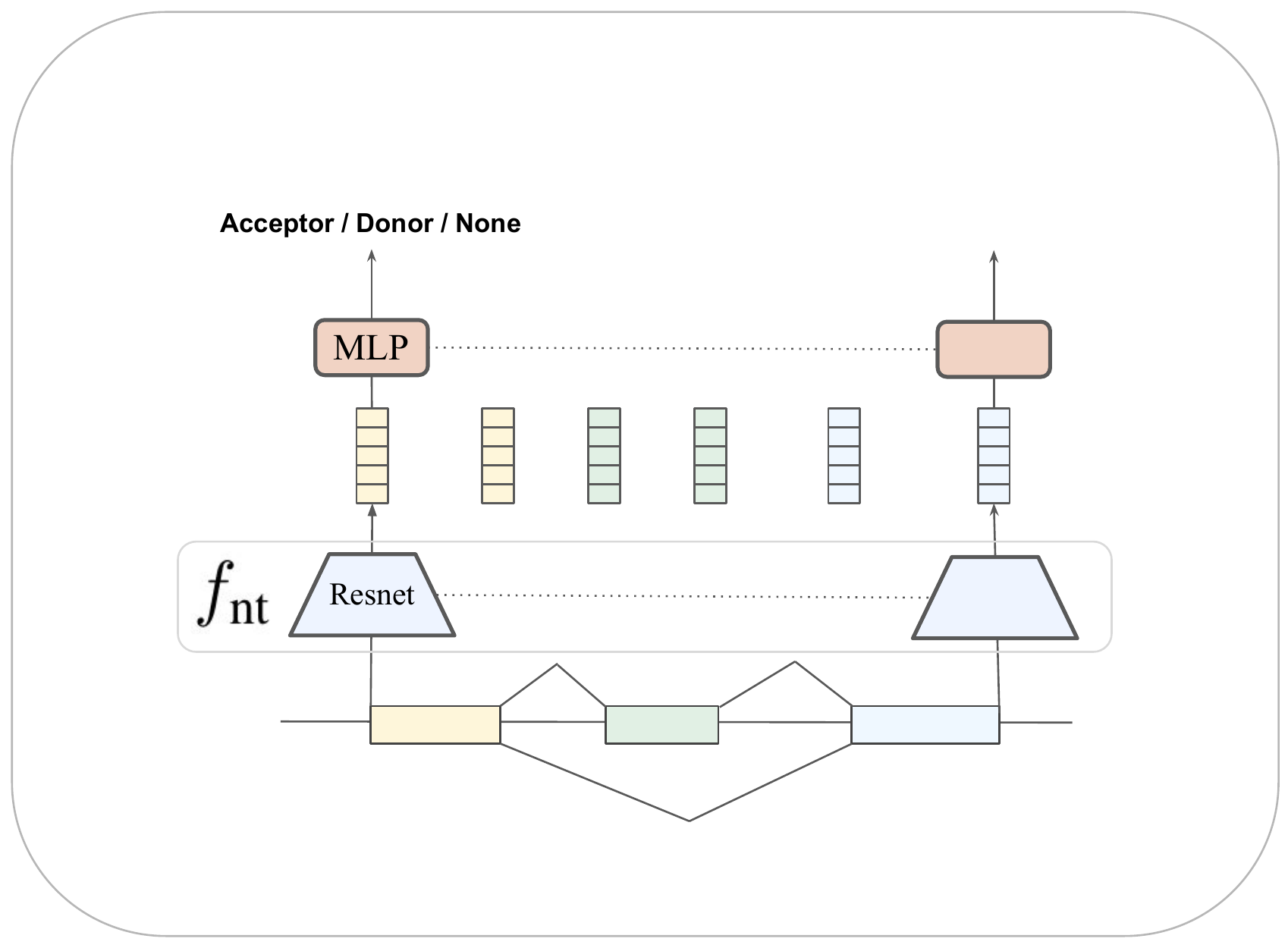}
  \caption{SpliceAI-cls}
  \label{fig:spliceai_cls}
\end{subfigure}
\begin{subfigure}{0.3\textwidth}
  \centering
  \includegraphics[width=\textwidth]{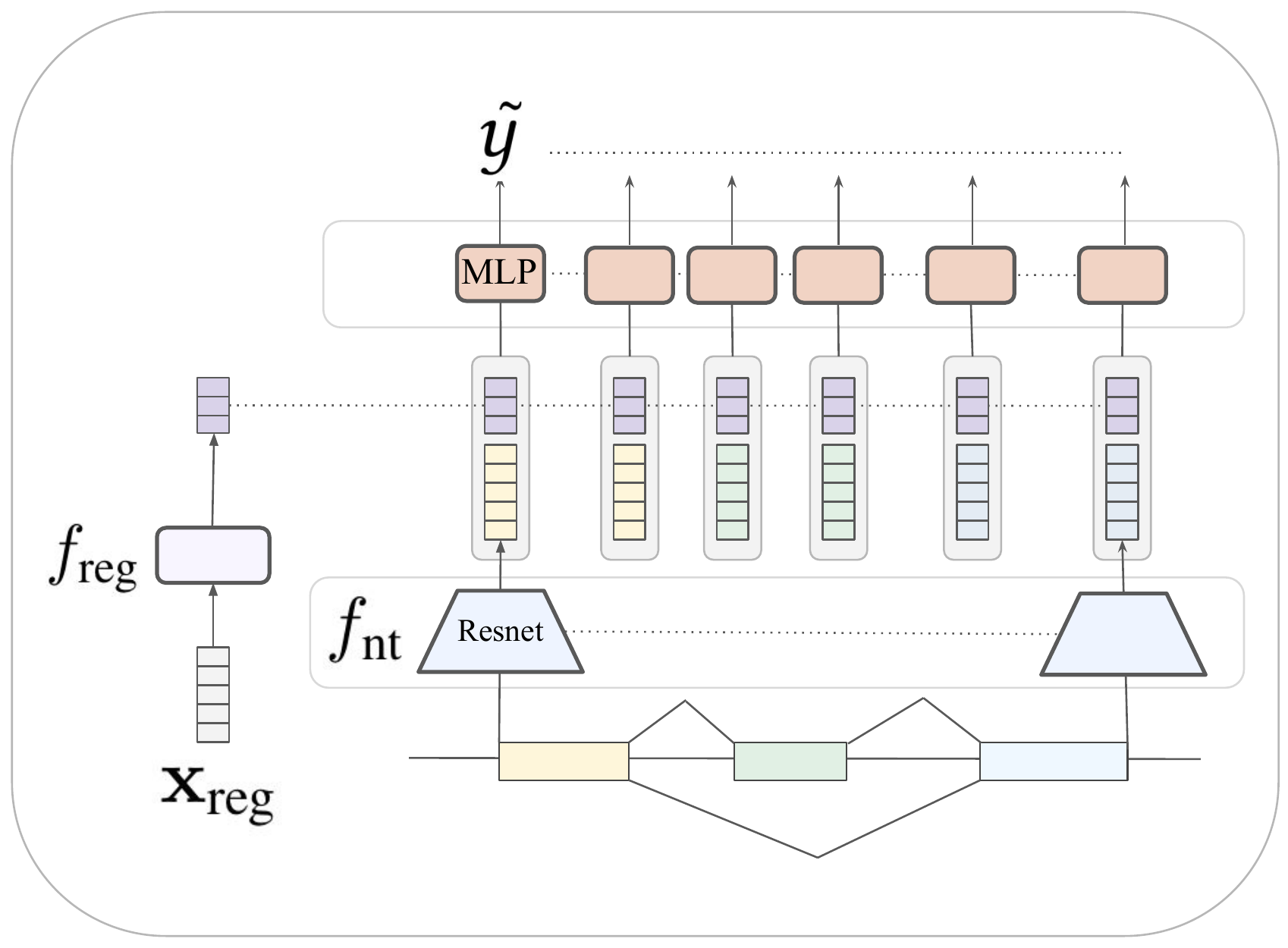}
  \caption{SpliceAI-reg \& -ML}
  \label{fig:spliceai_reg_ml}
\end{subfigure}
\begin{subfigure}{0.3\textwidth}
  \centering
  \includegraphics[width=\textwidth]{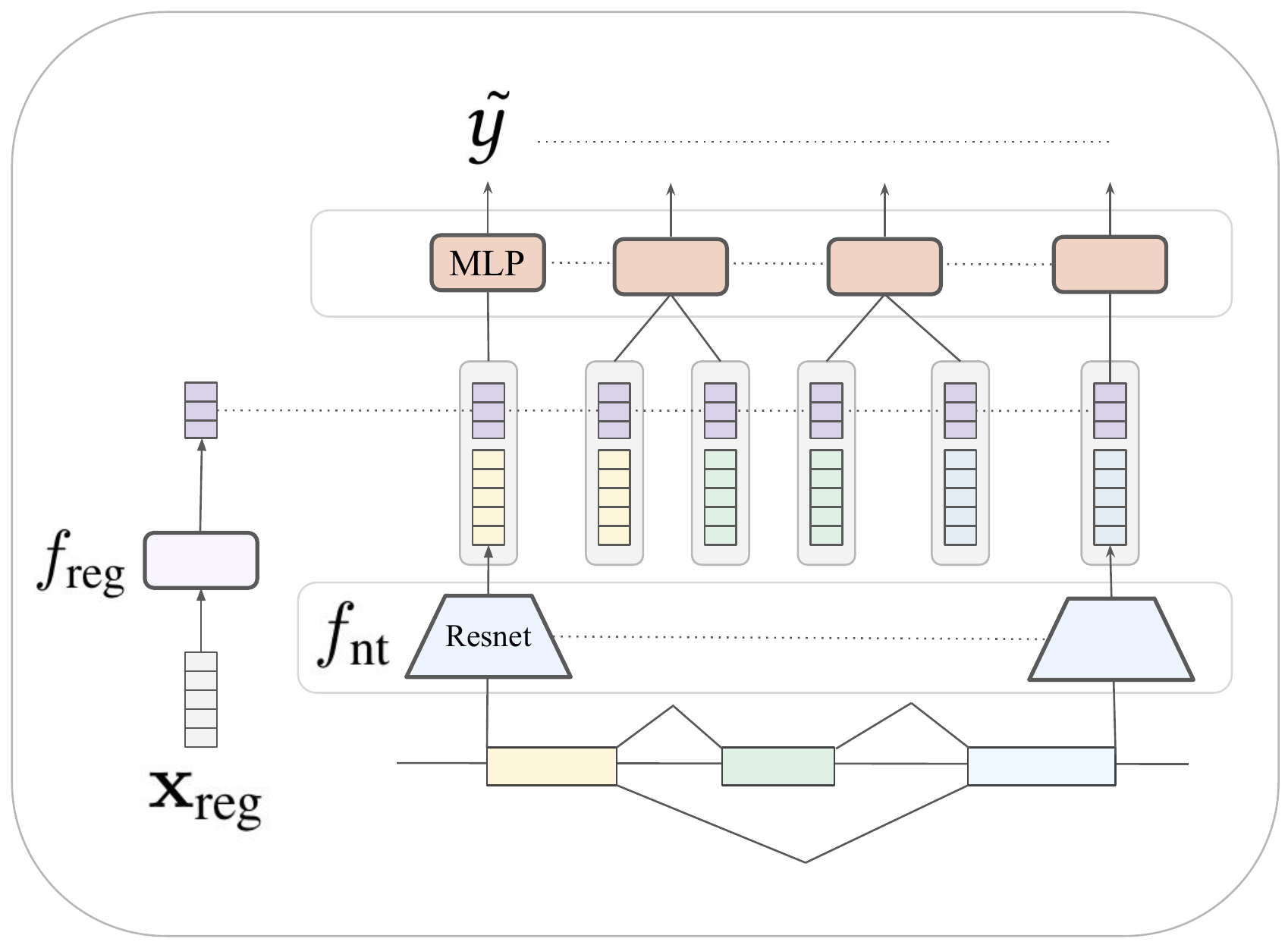}
  \caption{Junction-psi}
  \label{fig:junction_psi}
\end{subfigure}
\caption{Ablation models used to compare with our proposed DCEN model. (a) SpliceAI-cls \cite{Jaganathan2019} is trained to predict whether a particular nucleotide in the DNA sequence is an acceptor, donor site or neither of them from only the primary DNA sequence, without input of regulatory factors ($\xinput_{reg}$). The predicted probability of the acceptor/donor class is assumed to be predicted $\psi$ values during the comparison. (b) SpliceAI-reg and SpliceAI-ML are both trained to predict the $\psi$ value of a particular nucleotide with a regression objective. These two ablation models, unlike DCEN (Figure~\ref{fig:DCEN}), infer $\psi$ values from individual nucleotide's representation and does not rely on the compositionality of transcripts and their splice junctions. (c) Junction-psi infers the $\psi$ value of a particular splice junction from the joint representations of its acceptor and donor sites, without modeling the relationship between junctions and their parent transcripts. }
\label{fig:ablation models}
\end{figure*}

% \lipsum[1-2]
% \paragraph{Results} 
\subsection{Results} 
\label{sec:results}
The DCEN outperforms all baselines and ablation variants for both regression metrics when evaluated on the withheld test samples, as shown in Table~\ref{tab:test}. Even with the same number of parameters, the DCEN shows better performance than the SpliceAI-ML and Junction-psi model. Even with more trained parameters, Junction-psi model performs worse than the SpliceAI-reg and SpliceAI-cls+reg baselines while the SpliceAI-ML does not show clear improvement over these two baselines. 

A key difference between these baselines and DCEN lies in DCEN's inductive bias that models the hierarchical relationships between splice sites and junctions, as well as between splice junctions and their parent transcripts. Even with a matching number of parameters as DCEN, SpliceAI-ML still underperforms DCEN indicating DCEN's model size is not the key contributor to its performance. These observations indicate that DCEN's design to compose transcripts' energy through splice junctions and infer their probabilities from energy values is key for better prediction. The correlation results reported in Table~\ref{tab:test}, \ref{tab:test diff chromosomes} and \ref{tab:test long genes} all have p-values of zero in working precision due to the large number of gene samples.

Training DCEN for 5 times the training steps at the early stopped point does not result in better (Table~\ref{tab:test 5x iters} in S~\ref{sec:Additional results}) nor much worse performance on the test samples. This suggests that DCEN is underfitting the training dataset and may benefit from a larger model size. 

\subsubsection{Excluded chromosomes}
When evaluated separately test samples from chromosomes (1, 3, 5, 7, and 9) not seen during the training phase, DCEN maintains its superior performance (Table~\ref{tab:test diff chromosomes}) when compared to the ablation baselines, showing that it generalizes across novel gene sequences.

% \begin{table}[!htbp]
%     \centering
%     % \footnotesize
%     \caption{Performance of DCEN and baselines on withheld test samples.}
%         \begin{tabular}{ lcc }
%          \hline
%          Model & Spearman Cor. & Pearson Cor. \\
%          \hline
%          SpliceAI-cls & 0.402 & 0.353  \\
%          SpliceAI-reg & 0.595 & 0.586 \\
%          SpliceAI-cls+reg & 0.594 & 0.584 \\
%          \hline
%          SpliceAI-ML & 0.590 & 0.589  \\
%          Junction-psi & 0.579 & 0.523  \\
%          \hline
%          DCEN (ours) & \textbf{0.640} & \textbf{0.666}  \\
%          \hline
%         \end{tabular}
% \label{tab:test}
% \end{table}

\begin{table}[!htbp]
    \centering
    % \footnotesize
    \caption{Performance of DCEN and baselines on all withheld test samples.}
        \begin{tabular}{ lcc }
         \hline
         Model & Spearman Cor. & Pearson Cor. \\
         \hline
         SpliceAI-cls & 0.399 & 0.349  \\
         SpliceAI-reg & 0.579 & 0.574 \\
         SpliceAI-cls+reg & 0.577 & 0.571 \\
         \hline
         SpliceAI-ML & 0.572 & 0.575  \\
         Junction-psi & 0.559 & 0.510  \\
         \hline
         DCEN (ours) & \textbf{0.623} & \textbf{0.651}  \\
         \hline
        \end{tabular}
\label{tab:test}
\end{table}

% \paragraph{Long gene sequences}
\subsubsection{Long gene sequences}
DCEN was trained only on genes sequences of length less than 100K nucleotides. From Table~\ref{tab:test long genes}, we observe that DCEN still outperforms the other baselines by a substantial margin and retains most of its performance on these samples when evaluated on genes with long sequences ($>$100K nucleotides). Compared to shorter introns in genes of shorter length, splicing of pre-mRNA with very large introns was observed to occur in a more nested and sequential manner \citep{Sibley2015,Suzuki2013}. Since genes with long sequences contain more large introns, this difference in the splicing mechanism may explain the slight drop in DCEN's performance on genes with longer sequences.

% \begin{table}[!htbp]
%     \centering
%     % \footnotesize
%     \caption{Performance of DCEN and baselines on Test-Chr, test samples from chromosomes (1, 3, 5, 7, and 9) different from the training set, on 10 samples.}
%         \begin{tabular}{ lcc }
%          \hline
%          Model & Spearman Cor. & Pearson Cor. \\
%          \hline
%          SpliceAI-cls & 0.403 & 0.357 \\
%          SpliceAI-reg & 0.592 & 0.596 \\
%          SpliceAI-cls+reg & 0.593 & 0.594 \\
%          \hline
%          SpliceAI-ML & 0.587 & 0.599 \\
%          Junction-psi & 0.578 & 0.536 \\
%          \hline
%          DCEN (ours) & \textbf{0.638} & \textbf{0.678} \\
%          \hline
%         \end{tabular}
% \label{tab:test diff chromosomes}
% \end{table}

\begin{table}[!htbp]
    \centering
    % \footnotesize
    \caption{Performance of DCEN and baselines on Test-Chr, test samples from chromosomes (1, 3, 5, 7, and 9) different from the training set.}
        \begin{tabular}{ lcc }
         \hline
         Model & Spearman Cor. & Pearson Cor. \\
         \hline
         SpliceAI-cls & 0.400 & 0.353 \\
         SpliceAI-reg & 0.579 & 0.586 \\
         SpliceAI-cls+reg & 0.577 & 0.583 \\
         \hline
         SpliceAI-ML & 0.570 & 0.587 \\
         Junction-psi & 0.560 & 0.525 \\
         \hline
         DCEN (ours) & \textbf{0.622} & \textbf{0.665} \\
         \hline
        \end{tabular}
\label{tab:test diff chromosomes}
\end{table}

\subsubsection{Performance differs across tissue types} \label{sec:tissue type}
We observe that the performance of DCEN's prediction differs across test samples of different tissues types (Table~\ref{tab:test tissue}). The best performing tissue (heart) has Spearman rank correlation of 0.694 and Pearson correlation of 0.716 while the lowest (testes) achieves 0.415 and 0.423 respectively.

\begin{table}[!htbp]
    \centering
    % \footnotesize
    \caption{DCEN tissue-specific performance.}
        \begin{tabular}{ lcc }
         \hline
         Tissue & Spearman Cor. & Pearson Cor. \\
         \hline
        Testes & 0.415 & 0.423 \\
        Brain  & 0.497 & 0.582 \\
        Lung  & 0.585 & 0.624 \\
        Lymph  & 0.586 & 0.621 \\
        Liver  & 0.596 & 0.619 \\
        Blood  & 0.598 & 0.619 \\
        Colon  & 0.616 & 0.645 \\
        Breast  & 0.626 & 0.656 \\
        Adipose  & 0.653 & 0.674 \\
        Muscle  & 0.648 & 0.678 \\
        Kidney  & 0.654 & 0.684 \\
        Prostate  & 0.658 & 0.688 \\
        Thyroid  & 0.674 & 0.691 \\
        Heart  & 0.694 & 0.716 \\
         \hline
        \end{tabular}
\label{tab:test tissue}
\end{table}

% \begin{table}[!htbp]
%     \centering
%     % \footnotesize
%     \caption{Performance of DCEN and baselines on Test-Long, withheld samples with long gene sequences ($>$100K nucleotides), 10 samples.}
%         \begin{tabular}{ lcc }
%          \hline
%          Model & Spearman Cor. & Pearson Cor. \\
%          \hline
%          SpliceAI-cls &  0.403 & 0.349 \\
%          SpliceAI-reg & 0.603 &  0.579 \\
%          SpliceAI-cls+reg & 0.601 & 0.576 \\
%          \hline
%          SpliceAI-ML & 0.598 & 0.582  \\
%          Junction-psi & 0.582 & 0.510  \\
%          \hline
%          DCEN (ours) & \textbf{0.646} & \textbf{0.657} \\
%          \hline
%         \end{tabular}
% \label{tab:test long genes}
% \end{table}

\begin{table}[!htbp]
    \centering
    % \footnotesize
    \caption{Performance of DCEN and baselines on Test-Long, withheld samples with long gene sequences ($>$100K nucleotides).}
        \begin{tabular}{ lcc }
         \hline
         Model & Spearman Cor. & Pearson Cor. \\
         \hline
         SpliceAI-cls & 0.400 & 0.344 \\
         SpliceAI-reg & 0.585 & 0.564 \\
         SpliceAI-cls+reg & 0.582 & 0.561 \\
         \hline
         SpliceAI-ML & 0.577 & 0.565 \\
         Junction-psi & 0.560 & 0.495 \\
         \hline
         DCEN (ours) & \textbf{0.626} & \textbf{0.639} \\
         \hline
        \end{tabular}
\label{tab:test long genes}
\end{table}

\begin{figure}[!htbp]
    \centering
    \includegraphics[width=\linewidth]{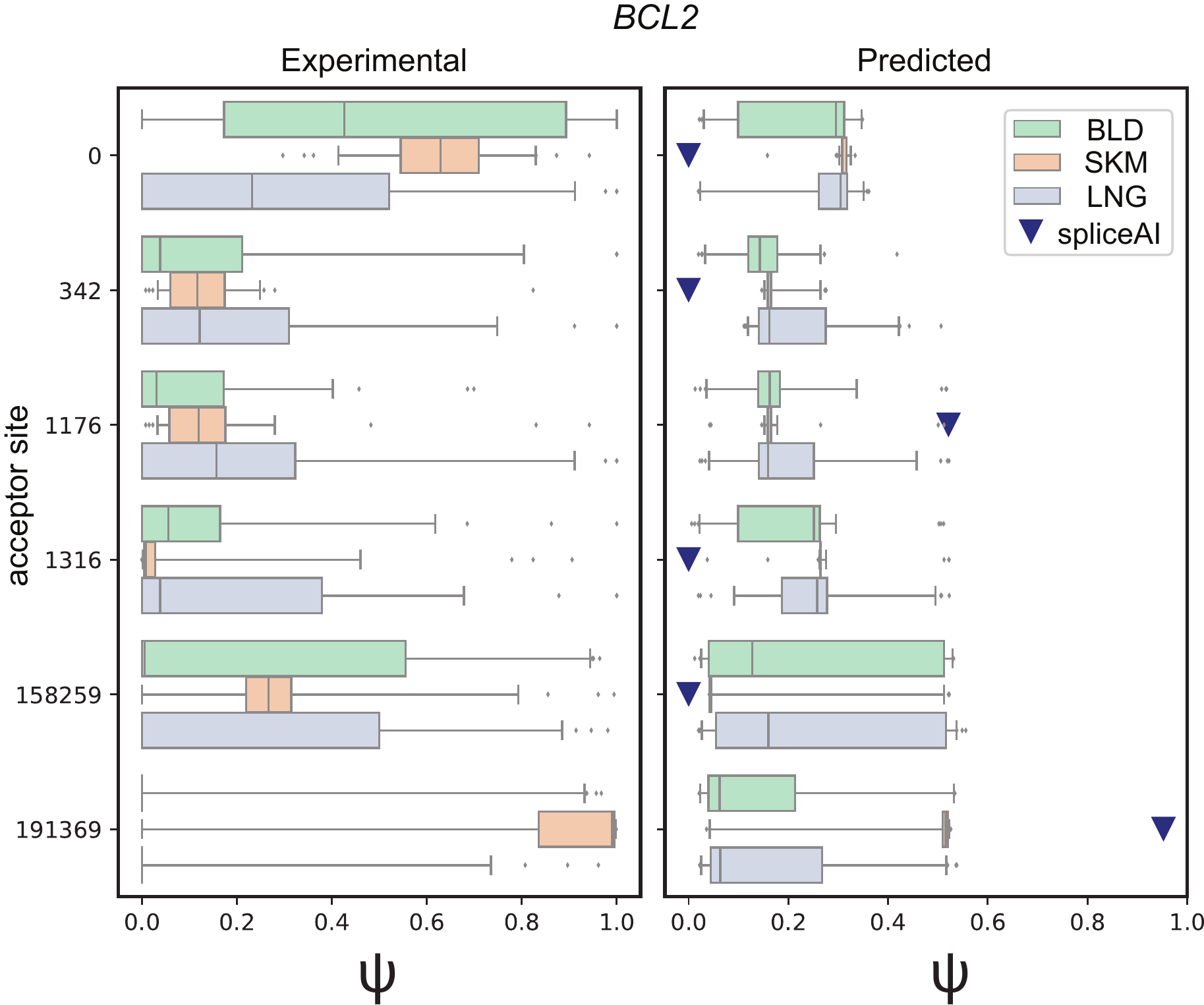}
    \caption{Comparison of the BCL2 gene acceptor sites' $\psi$, experimental vs predicted. Left: box plots of experimental values (grounds truths) of BCL2 acceptor sites in three different tissues: blood, skeletal muscle and lung, denoted as BLD, SKM and LNG respectively. Right: box plots of predicted $\psi$ for the same group of samples using DCEN; SpliceAI-cls prediction (black triangles) shown for comparison. Please note that SpliceAI-cls prediction uses sequence information only and the prediction will be the same for every sample. Acceptor site coordinates (y-axis) are counted relative to the genomic coordinate of the first exon start of BCL2. Data from 50 samples from each tissue are shown.}
    \label{fig:BCL2}
\end{figure}

\subsubsection{Integration of regulatory features allows for sample-specific predictions}
An example of a set of model predictions in comparison with the ground truth values for BCL2 gene acceptor sites is shown in Figure~\ref{fig:BCL2}. Box plots of experimental values (left of Figure~\ref{fig:BCL2}) represent distributions of $\psi$ in 50 samples for three types of tissues: blood, skeletal muscle and lung. The predictions of DCEN vs SpliceAI-cls for these acceptor sites are shown on the right. We observe that DCEN's predictions can partially recover the variances in $\psi$ values unique to tissue types (e.g., acceptor site 158259 in Figure~\ref{fig:BCL2}). When comparing the variance of $\psi$ prediction within each tissue type with the variance of ground-truth $\psi$, the Pearson correlation is 0.177 while the Spearman rank correlation is 0.198. Figure~\ref{fig:BCL2don} in \S~\ref{sec:Additional results} shows the comparison for BCL2 donor sites.

% \textcolor{violet}{
% \section{Discussions}
% }
% In this work, we developed a regression model DCEN that combines both universal pre-mRNA sequence features and sample-specific regulatory information to predict sample-specific splicing. We aimed to predict splicing of the transcript isoform as a whole, allowing the model to consider splice site choice freely along with the transcript, in contrast with previous models that predicted discrete splicing events according to the limited alternative splicing events classification. Our novel task-specific architecture focused on predicting the splicing outcome in a given transcript based on the probabilities of producing each of its splice junctions, therefore reflecting the spliceosome processing of the transcript \textit{in vivo}. In addition, we constructed the CAPD dataset containing annotated pairs of regulatory features and splice site inclusion levels in addition to universal pre-mRNA sequences to improve the predictive power of the model and make the predictions sample-specific. We found that the model can learn tissue-specific dependencies and, for instance, reproduce differing splice site inclusion levels variance across tissue types.

\section{Limitations \& Future Directions}
ARCHS4 database contains data from short-read RNA-seq, and alignment of short reads to quantify transcript isoforms and their respective exons is challenging and an active area of ongoing research \citep{Steijger2013}. Inaccurate attribution of short reads might result in misleading exon counts between transcript isoforms and skewed $\psi$ levels, which limits the prediction usability in a clinical setting. 
% We speculate that contributes to the variance in DCEN's performance across the different tissue types (\S~\ref{sec:tissue type}), since the choice of alignment algorithms might differ across different labs analyzing these different tissues.

Another possible issue arises from the fact that samples from the ARCHS4 database are heterogeneous. We speculate that this heterogeneity contributes to the variance in DCEN's performance across the different tissue types (\S~\ref{sec:tissue type}), since the experimental parameters might differ across different labs while collecting the data from these different tissues.

Although we performed batch effect removal with a standard \textit{edgeR} procedure, batch effects might remain in the data. Homogeneous datasets, as Genotype-Tissue Expression (GTEx) \footnote{https://gtexportal.org} may be used instead, but using ARCHS4 database offers advantages in 1) the diversity of tissue types (healthy and diseased) and 2) its amendable form of data to be further processed for downstream applications.
 
In the training and evaluation here, the predictions are inferred based on a universal DNA sequence assumed to be the same for all patients. Genomic variations, such as mutations and single nucleotide variants, often lead to aberrant splicing outcomes. One future direction would be to utilize DCEN to study the roles of such genomic variations in the alternative outcomes. Through DCEN's hierarchical approach of modeling whole transcripts' probabilities, it is possible to not only draw insights into how mutations can affect inclusion levels of individual splice sites \citep{Jaganathan2019} but also into the relative expression of transcripts.
 
% Genomic variations, such as mutations and single nucleotide variants, often lead to aberrant splicing outcomes. It is an important problem for splicing prediction models to be able to predict the effect of a mutation on the splice site usage. In our data, there is no explicit variant information, and variant detection from RNA-seq data is challenging \citep{Zhao2019}. However, it has been demonstrated that even with no explicit inclusion of variant information during training, a sufficiently complex deep learning model predicts mutation effects on splice site usage fairly well \citep{Jaganathan2019}. One of the future improvements that only requires modifying the output is using the DCEN model to multi-task between predictions of relative splice site usage scores (as it does now) and cryptic splice sites. As our model contains ResNet ($f_{nt}$), similarly to SpliceAI, which was designed for classification and cryptic splice site detection, it is possible to add a classification step before the energy part of the model to obtain similar predictions. These predictions will have the same advantage of predicting cryptic splice sites in a sample-specific way, as the vector of regulatory features is included. 

\section{Conclusions}
We curate CAPD to benchmark learning models on alternative splicing (AS) prediction as a regression task, to facilitate future work in this key biological process. By exploiting the compositionality of discrete components, we propose DCEN to predict the AS outcome by modeling mRNA transcripts' probabilities through its constituent splice junctions' energy levels. Through our experiments on CAPD, we show that DCEN outperforms baselines and other ablation variants in predicting AS outcomes. Our work shows that deconstructing a task into a hierarchy of discrete components can improve performance in learning models. We hope that DCEN can be used in future work to study RNA regulatory factors' role in aberrant splicing events.

\section{Acknowledgments}

This work is supported by the Data Science and Artificial Intelligence Research Center (DSAIR), the School of Computer Science and Engineering at Nanyang Technological University and the Singapore National Research Foundation Investigatorship (NRF-NRFI2017-09).

\bibliographystyle{ACM-Reference-Format}
\bibliography{camera-ready}

%%% -*-BibTeX-*-
%%% Do NOT edit. File created by BibTeX with style
%%% ACM-Reference-Format-Journals [18-Jan-2012].

\begin{thebibliography}{45}

%%% ====================================================================
%%% NOTE TO THE USER: you can override these defaults by providing
%%% customized versions of any of these macros before the \bibliography
%%% command.  Each of them MUST provide its own final punctuation,
%%% except for \shownote{}, \showDOI{}, and \showURL{}.  The latter two
%%% do not use final punctuation, in order to avoid confusing it with
%%% the Web address.
%%%
%%% To suppress output of a particular field, define its macro to expand
%%% to an empty string, or better, \unskip, like this:
%%%
%%% \newcommand{\showDOI}[1]{\unskip}   % LaTeX syntax
%%%
%%% \def \showDOI #1{\unskip}           % plain TeX syntax
%%%
%%% ====================================================================

\ifx \showCODEN    \undefined \def \showCODEN     #1{\unskip}     \fi
\ifx \showDOI      \undefined \def \showDOI       #1{#1}\fi
\ifx \showISBNx    \undefined \def \showISBNx     #1{\unskip}     \fi
\ifx \showISBNxiii \undefined \def \showISBNxiii  #1{\unskip}     \fi
\ifx \showISSN     \undefined \def \showISSN      #1{\unskip}     \fi
\ifx \showLCCN     \undefined \def \showLCCN      #1{\unskip}     \fi
\ifx \shownote     \undefined \def \shownote      #1{#1}          \fi
\ifx \showarticletitle \undefined \def \showarticletitle #1{#1}   \fi
\ifx \showURL      \undefined \def \showURL       {\relax}        \fi
% The following commands are used for tagged output and should be
% invisible to TeX
\providecommand\bibfield[2]{#2}
\providecommand\bibinfo[2]{#2}
\providecommand\natexlab[1]{#1}
\providecommand\showeprint[2][]{arXiv:#2}

\bibitem[\protect\citeauthoryear{Barash, Calarco, Gao, Pan, Wang, Shai,
  Blencowe, and Frey}{Barash et~al\mbox{.}}{2010}]%
        {Barash2010}
\bibfield{author}{\bibinfo{person}{Yoseph Barash}, \bibinfo{person}{John~A.
  Calarco}, \bibinfo{person}{Weijun Gao}, \bibinfo{person}{Qun Pan},
  \bibinfo{person}{Xinchen Wang}, \bibinfo{person}{Ofer Shai},
  \bibinfo{person}{Benjamin~J. Blencowe}, {and} \bibinfo{person}{Brendan~J.
  Frey}.} \bibinfo{year}{2010}\natexlab{}.
\newblock \showarticletitle{{Deciphering the splicing code}}.
\newblock \bibinfo{journal}{\emph{Nature}} (\bibinfo{year}{2010}).
\newblock
\showISSN{00280836}
\urldef\tempurl%
\url{https://doi.org/10.1038/nature09000}
\showDOI{\tempurl}


\bibitem[\protect\citeauthoryear{Basturea}{Basturea}{2013}]%
        {Basturea2013}
\bibfield{author}{\bibinfo{person}{Georgeta~N. Basturea}.}
  \bibinfo{year}{2013}\natexlab{}.
\newblock \showarticletitle{{Research Methods for Detection and Quantitation of
  RNA Modifications}}.
\newblock \bibinfo{journal}{\emph{Materials and Methods}}
  (\bibinfo{year}{2013}).
\newblock
\urldef\tempurl%
\url{https://doi.org/10.13070/mm.en.3.186}
\showDOI{\tempurl}


\bibitem[\protect\citeauthoryear{Boo and Kim}{Boo and Kim}{2020}]%
        {Boo2020}
\bibfield{author}{\bibinfo{person}{Sung~Ho Boo} {and} \bibinfo{person}{Yoon~Ki
  Kim}.} \bibinfo{year}{2020}\natexlab{}.
\newblock \bibinfo{title}{{The emerging role of RNA modifications in the
  regulation of mRNA stability}}.
\newblock
\newblock
\showISSN{20926413}
\urldef\tempurl%
\url{https://doi.org/10.1038/s12276-020-0407-z}
\showDOI{\tempurl}


\bibitem[\protect\citeauthoryear{Bretschneider, Gandhi, Deshwar, Zuberi, and
  Frey}{Bretschneider et~al\mbox{.}}{2018}]%
        {Bretschneider2018}
\bibfield{author}{\bibinfo{person}{Hannes Bretschneider},
  \bibinfo{person}{Shreshth Gandhi}, \bibinfo{person}{Amit~G. Deshwar},
  \bibinfo{person}{Khalid Zuberi}, {and} \bibinfo{person}{Brendan~J. Frey}.}
  \bibinfo{year}{2018}\natexlab{}.
\newblock \showarticletitle{{COSSMO: Predicting competitive alternative splice
  site selection using deep learning}}. In
  \bibinfo{booktitle}{\emph{Bioinformatics}}.
\newblock
\showISSN{14602059}
\urldef\tempurl%
\url{https://doi.org/10.1093/bioinformatics/bty244}
\showDOI{\tempurl}


\bibitem[\protect\citeauthoryear{Bryant, Priest, and Mockler}{Bryant
  et~al\mbox{.}}{2012}]%
        {Bryant2012}
\bibfield{author}{\bibinfo{person}{Douglas~W. Bryant},
  \bibinfo{person}{Henry~D. Priest}, {and} \bibinfo{person}{Todd~C. Mockler}.}
  \bibinfo{year}{2012}\natexlab{}.
\newblock \showarticletitle{{Detection and quantification of alternative
  splicing variants using RNA-seq}}.
\newblock \bibinfo{journal}{\emph{Methods in Molecular Biology}}
  (\bibinfo{year}{2012}).
\newblock
\showISBNx{9781617798382}
\showISSN{10643745}
\urldef\tempurl%
\url{https://doi.org/10.1007/978-1-61779-839-9_7}
\showDOI{\tempurl}


\bibitem[\protect\citeauthoryear{Cheng, Celik, Kundaje, and Gagneur}{Cheng
  et~al\mbox{.}}{2020}]%
        {cheng2020mtsplice}
\bibfield{author}{\bibinfo{person}{Jun Cheng}, \bibinfo{person}{Muhammed~Hasan
  Celik}, \bibinfo{person}{Anshul Kundaje}, {and} \bibinfo{person}{Julien
  Gagneur}.} \bibinfo{year}{2020}\natexlab{}.
\newblock \showarticletitle{MTSplice predicts effects of genetic variants on
  tissue-specific splicing}.
\newblock \bibinfo{journal}{\emph{bioRxiv}} (\bibinfo{year}{2020}).
\newblock


\bibitem[\protect\citeauthoryear{Cheng, Nguyen, Cygan, {\c{C}}elik,
  Fairbrother, Gagneur, et~al\mbox{.}}{Cheng et~al\mbox{.}}{2019}]%
        {cheng2019mmsplice}
\bibfield{author}{\bibinfo{person}{Jun Cheng}, \bibinfo{person}{Thi Yen~Duong
  Nguyen}, \bibinfo{person}{Kamil~J Cygan}, \bibinfo{person}{Muhammed~Hasan
  {\c{C}}elik}, \bibinfo{person}{William~G Fairbrother},
  \bibinfo{person}{Julien Gagneur}, {et~al\mbox{.}}}
  \bibinfo{year}{2019}\natexlab{}.
\newblock \showarticletitle{MMSplice: modular modeling improves the predictions
  of genetic variant effects on splicing}.
\newblock \bibinfo{journal}{\emph{Genome biology}} \bibinfo{volume}{20},
  \bibinfo{number}{1} (\bibinfo{year}{2019}), \bibinfo{pages}{1--15}.
\newblock


\bibitem[\protect\citeauthoryear{Cook, Kazan, Zuberi, Morris, and Hughes}{Cook
  et~al\mbox{.}}{2011}]%
        {Cook2011}
\bibfield{author}{\bibinfo{person}{Kate~B. Cook}, \bibinfo{person}{Hilal
  Kazan}, \bibinfo{person}{Khalid Zuberi}, \bibinfo{person}{Quaid Morris},
  {and} \bibinfo{person}{Timothy~R. Hughes}.} \bibinfo{year}{2011}\natexlab{}.
\newblock \showarticletitle{{RBPDB: A database of RNA-binding specificities}}.
\newblock \bibinfo{journal}{\emph{Nucleic Acids Research}}
  (\bibinfo{year}{2011}).
\newblock
\showISSN{03051048}
\urldef\tempurl%
\url{https://doi.org/10.1093/nar/gkq1069}
\showDOI{\tempurl}


\bibitem[\protect\citeauthoryear{Degroeve, Saeys, {De Baets}, Rouz{\'{e}}, and
  {Van de Peer}}{Degroeve et~al\mbox{.}}{2005}]%
        {Degroeve2005}
\bibfield{author}{\bibinfo{person}{Sven Degroeve}, \bibinfo{person}{Yvan
  Saeys}, \bibinfo{person}{Bernard {De Baets}}, \bibinfo{person}{Pierre
  Rouz{\'{e}}}, {and} \bibinfo{person}{Yves {Van de Peer}}.}
  \bibinfo{year}{2005}\natexlab{}.
\newblock \showarticletitle{{SpliceMachine: Predicting splice sites from
  high-dimensional local context representations}}.
\newblock \bibinfo{journal}{\emph{Bioinformatics}} (\bibinfo{year}{2005}).
\newblock
\showISSN{13674803}
\urldef\tempurl%
\url{https://doi.org/10.1093/bioinformatics/bti166}
\showDOI{\tempurl}


\bibitem[\protect\citeauthoryear{Du, Li, and Mordatch}{Du
  et~al\mbox{.}}{2019}]%
        {du2019compositional}
\bibfield{author}{\bibinfo{person}{Yilun Du}, \bibinfo{person}{Shuang Li},
  {and} \bibinfo{person}{Igor Mordatch}.} \bibinfo{year}{2019}\natexlab{}.
\newblock \showarticletitle{Compositional Visual Generation with Energy Based
  Models}.
\newblock  (\bibinfo{year}{2019}).
\newblock


\bibitem[\protect\citeauthoryear{Du, Meier, Ma, Fergus, and Rives}{Du
  et~al\mbox{.}}{2020}]%
        {du2020energy}
\bibfield{author}{\bibinfo{person}{Yilun Du}, \bibinfo{person}{Joshua Meier},
  \bibinfo{person}{Jerry Ma}, \bibinfo{person}{Rob Fergus}, {and}
  \bibinfo{person}{Alexander Rives}.} \bibinfo{year}{2020}\natexlab{}.
\newblock \showarticletitle{Energy-based models for atomic-resolution protein
  conformations}.
\newblock \bibinfo{journal}{\emph{arXiv preprint arXiv:2004.13167}}
  (\bibinfo{year}{2020}).
\newblock


\bibitem[\protect\citeauthoryear{Du and Mordatch}{Du and Mordatch}{2019}]%
        {du2019implicit}
\bibfield{author}{\bibinfo{person}{Yilun Du} {and} \bibinfo{person}{Igor
  Mordatch}.} \bibinfo{year}{2019}\natexlab{}.
\newblock \showarticletitle{Implicit generation and generalization in
  energy-based models}.
\newblock \bibinfo{journal}{\emph{arXiv preprint arXiv:1903.08689}}
  (\bibinfo{year}{2019}).
\newblock


\bibitem[\protect\citeauthoryear{Frankish, Diekhans, Ferreira, Johnson,
  Jungreis, Loveland, Mudge, Sisu, Wright, Armstrong, Barnes, Berry, Bignell,
  {Carbonell Sala}, Chrast, Cunningham, {Di Domenico}, Donaldson, Fiddes,
  {Garc{\'{i}}a Gir{\'{o}}n}, Gonzalez, Grego, Hardy, Hourlier, Hunt, Izuogu,
  Lagarde, Martin, Mart{\'{i}}nez, Mohanan, Muir, Navarro, Parker, Pei, Pozo,
  Ruffier, Schmitt, Stapleton, Suner, Sycheva, Uszczynska-Ratajczak, Xu, Yates,
  Zerbino, Zhang, Aken, Choudhary, Gerstein, Guig{\'{o}}, Hubbard, Kellis,
  Paten, Reymond, Tress, and Flicek}{Frankish et~al\mbox{.}}{2019}]%
        {Frankish2019}
\bibfield{author}{\bibinfo{person}{Adam Frankish}, \bibinfo{person}{Mark
  Diekhans}, \bibinfo{person}{Anne~Maud Ferreira}, \bibinfo{person}{Rory
  Johnson}, \bibinfo{person}{Irwin Jungreis}, \bibinfo{person}{Jane Loveland},
  \bibinfo{person}{Jonathan~M. Mudge}, \bibinfo{person}{Cristina Sisu},
  \bibinfo{person}{James Wright}, \bibinfo{person}{Joel Armstrong},
  \bibinfo{person}{If Barnes}, \bibinfo{person}{Andrew Berry},
  \bibinfo{person}{Alexandra Bignell}, \bibinfo{person}{Silvia {Carbonell
  Sala}}, \bibinfo{person}{Jacqueline Chrast}, \bibinfo{person}{Fiona
  Cunningham}, \bibinfo{person}{Tom{\'{a}}s {Di Domenico}},
  \bibinfo{person}{Sarah Donaldson}, \bibinfo{person}{Ian~T. Fiddes},
  \bibinfo{person}{Carlos {Garc{\'{i}}a Gir{\'{o}}n}},
  \bibinfo{person}{Jose~Manuel Gonzalez}, \bibinfo{person}{Tiago Grego},
  \bibinfo{person}{Matthew Hardy}, \bibinfo{person}{Thibaut Hourlier},
  \bibinfo{person}{Toby Hunt}, \bibinfo{person}{Osagie~G. Izuogu},
  \bibinfo{person}{Julien Lagarde}, \bibinfo{person}{Fergal~J. Martin},
  \bibinfo{person}{Laura Mart{\'{i}}nez}, \bibinfo{person}{Shamika Mohanan},
  \bibinfo{person}{Paul Muir}, \bibinfo{person}{Fabio~C.P. Navarro},
  \bibinfo{person}{Anne Parker}, \bibinfo{person}{Baikang Pei},
  \bibinfo{person}{Fernando Pozo}, \bibinfo{person}{Magali Ruffier},
  \bibinfo{person}{Bianca~M. Schmitt}, \bibinfo{person}{Eloise Stapleton},
  \bibinfo{person}{Marie~Marthe Suner}, \bibinfo{person}{Irina Sycheva},
  \bibinfo{person}{Barbara Uszczynska-Ratajczak}, \bibinfo{person}{Jinuri Xu},
  \bibinfo{person}{Andrew Yates}, \bibinfo{person}{Daniel Zerbino},
  \bibinfo{person}{Yan Zhang}, \bibinfo{person}{Bronwen Aken},
  \bibinfo{person}{Jyoti~S. Choudhary}, \bibinfo{person}{Mark Gerstein},
  \bibinfo{person}{Roderic Guig{\'{o}}}, \bibinfo{person}{Tim~J.P. Hubbard},
  \bibinfo{person}{Manolis Kellis}, \bibinfo{person}{Benedict Paten},
  \bibinfo{person}{Alexandre Reymond}, \bibinfo{person}{Michael~L. Tress},
  {and} \bibinfo{person}{Paul Flicek}.} \bibinfo{year}{2019}\natexlab{}.
\newblock \showarticletitle{{GENCODE reference annotation for the human and
  mouse genomes}}.
\newblock \bibinfo{journal}{\emph{Nucleic Acids Research}}
  (\bibinfo{year}{2019}).
\newblock
\showISSN{13624962}
\urldef\tempurl%
\url{https://doi.org/10.1093/nar/gky955}
\showDOI{\tempurl}


\bibitem[\protect\citeauthoryear{Gallego-Paez, Bordone, Leote,
  Saraiva-Agostinho, Ascens{\~{a}}o-Ferreira, and Barbosa-Morais}{Gallego-Paez
  et~al\mbox{.}}{2017}]%
        {Gallego-Paez2017}
\bibfield{author}{\bibinfo{person}{L.~M. Gallego-Paez}, \bibinfo{person}{M.~C.
  Bordone}, \bibinfo{person}{A.~C. Leote}, \bibinfo{person}{N.
  Saraiva-Agostinho}, \bibinfo{person}{M. Ascens{\~{a}}o-Ferreira}, {and}
  \bibinfo{person}{N.~L. Barbosa-Morais}.} \bibinfo{year}{2017}\natexlab{}.
\newblock \bibinfo{title}{{Alternative splicing: the pledge, the turn, and the
  prestige: The key role of alternative splicing in human biological systems}}.
\newblock
\newblock
\showISSN{14321203}
\urldef\tempurl%
\url{https://doi.org/10.1007/s00439-017-1790-y}
\showDOI{\tempurl}


\bibitem[\protect\citeauthoryear{Grathwohl, Wang, Jacobsen, Duvenaud, Norouzi,
  and Swersky}{Grathwohl et~al\mbox{.}}{2019}]%
        {grathwohl2019your}
\bibfield{author}{\bibinfo{person}{Will Grathwohl}, \bibinfo{person}{Kuan-Chieh
  Wang}, \bibinfo{person}{J{\"o}rn-Henrik Jacobsen}, \bibinfo{person}{David
  Duvenaud}, \bibinfo{person}{Mohammad Norouzi}, {and} \bibinfo{person}{Kevin
  Swersky}.} \bibinfo{year}{2019}\natexlab{}.
\newblock \showarticletitle{Your classifier is secretly an energy based model
  and you should treat it like one}.
\newblock \bibinfo{journal}{\emph{arXiv preprint arXiv:1912.03263}}
  (\bibinfo{year}{2019}).
\newblock


\bibitem[\protect\citeauthoryear{Haarnoja, Tang, Abbeel, and Levine}{Haarnoja
  et~al\mbox{.}}{2017}]%
        {haarnoja2017reinforcement}
\bibfield{author}{\bibinfo{person}{Tuomas Haarnoja}, \bibinfo{person}{Haoran
  Tang}, \bibinfo{person}{Pieter Abbeel}, {and} \bibinfo{person}{Sergey
  Levine}.} \bibinfo{year}{2017}\natexlab{}.
\newblock \showarticletitle{Reinforcement learning with deep energy-based
  policies}.
\newblock \bibinfo{journal}{\emph{arXiv preprint arXiv:1702.08165}}
  (\bibinfo{year}{2017}).
\newblock


\bibitem[\protect\citeauthoryear{He, Zhang, Ren, and Sun}{He
  et~al\mbox{.}}{2016}]%
        {he2016deep}
\bibfield{author}{\bibinfo{person}{Kaiming He}, \bibinfo{person}{Xiangyu
  Zhang}, \bibinfo{person}{Shaoqing Ren}, {and} \bibinfo{person}{Jian Sun}.}
  \bibinfo{year}{2016}\natexlab{}.
\newblock \showarticletitle{Deep residual learning for image recognition}. In
  \bibinfo{booktitle}{\emph{Proceedings of the IEEE conference on computer
  vision and pattern recognition}}. \bibinfo{pages}{770--778}.
\newblock


\bibitem[\protect\citeauthoryear{Huang and Sanguinetti}{Huang and
  Sanguinetti}{2017}]%
        {Huang2017}
\bibfield{author}{\bibinfo{person}{Yuanhua Huang} {and} \bibinfo{person}{Guido
  Sanguinetti}.} \bibinfo{year}{2017}\natexlab{}.
\newblock \showarticletitle{{BRIE: Transcriptome-wide splicing quantification
  in single cells}}.
\newblock \bibinfo{journal}{\emph{Genome Biology}} (\bibinfo{year}{2017}).
\newblock
\showISSN{1474760X}
\urldef\tempurl%
\url{https://doi.org/10.1186/s13059-017-1248-5}
\showDOI{\tempurl}


\bibitem[\protect\citeauthoryear{Jaganathan, {Kyriazopoulou Panagiotopoulou},
  McRae, Darbandi, Knowles, Li, Kosmicki, Arbelaez, Cui, Schwartz, Chow,
  Kanterakis, Gao, Kia, Batzoglou, Sanders, and Farh}{Jaganathan
  et~al\mbox{.}}{2019}]%
        {Jaganathan2019}
\bibfield{author}{\bibinfo{person}{Kishore Jaganathan}, \bibinfo{person}{Sofia
  {Kyriazopoulou Panagiotopoulou}}, \bibinfo{person}{Jeremy~F. McRae},
  \bibinfo{person}{Siavash~Fazel Darbandi}, \bibinfo{person}{David Knowles},
  \bibinfo{person}{Yang~I. Li}, \bibinfo{person}{Jack~A. Kosmicki},
  \bibinfo{person}{Juan Arbelaez}, \bibinfo{person}{Wenwu Cui},
  \bibinfo{person}{Grace~B. Schwartz}, \bibinfo{person}{Eric~D. Chow},
  \bibinfo{person}{Efstathios Kanterakis}, \bibinfo{person}{Hong Gao},
  \bibinfo{person}{Amirali Kia}, \bibinfo{person}{Serafim Batzoglou},
  \bibinfo{person}{Stephan~J. Sanders}, {and} \bibinfo{person}{Kyle Kai~How
  Farh}.} \bibinfo{year}{2019}\natexlab{}.
\newblock \showarticletitle{{Predicting Splicing from Primary Sequence with
  Deep Learning}}.
\newblock \bibinfo{journal}{\emph{Cell}} (\bibinfo{year}{2019}).
\newblock
\showISSN{10974172}
\urldef\tempurl%
\url{https://doi.org/10.1016/j.cell.2018.12.015}
\showDOI{\tempurl}


\bibitem[\protect\citeauthoryear{Jha, Gazzara, and Barash}{Jha
  et~al\mbox{.}}{2017}]%
        {jha2017integrative}
\bibfield{author}{\bibinfo{person}{Anupama Jha}, \bibinfo{person}{Matthew~R
  Gazzara}, {and} \bibinfo{person}{Yoseph Barash}.}
  \bibinfo{year}{2017}\natexlab{}.
\newblock \showarticletitle{Integrative deep models for alternative splicing}.
\newblock \bibinfo{journal}{\emph{Bioinformatics}} \bibinfo{volume}{33},
  \bibinfo{number}{14} (\bibinfo{year}{2017}), \bibinfo{pages}{i274--i282}.
\newblock


\bibitem[\protect\citeauthoryear{Lachmann, Torre, Keenan, Jagodnik, Lee, Wang,
  Silverstein, and Ma'ayan}{Lachmann et~al\mbox{.}}{2018}]%
        {Lachmann2018}
\bibfield{author}{\bibinfo{person}{Alexander Lachmann}, \bibinfo{person}{Denis
  Torre}, \bibinfo{person}{Alexandra~B. Keenan}, \bibinfo{person}{Kathleen~M.
  Jagodnik}, \bibinfo{person}{Hoyjin~J. Lee}, \bibinfo{person}{Lily Wang},
  \bibinfo{person}{Moshe~C. Silverstein}, {and} \bibinfo{person}{Avi Ma'ayan}.}
  \bibinfo{year}{2018}\natexlab{}.
\newblock \showarticletitle{{Massive mining of publicly available RNA-seq data
  from human and mouse}}.
\newblock \bibinfo{journal}{\emph{Nature Communications}}
  (\bibinfo{year}{2018}).
\newblock
\showISSN{20411723}
\urldef\tempurl%
\url{https://doi.org/10.1038/s41467-018-03751-6}
\showDOI{\tempurl}


\bibitem[\protect\citeauthoryear{LeCun, Chopra, Hadsell, Ranzato, and
  Huang}{LeCun et~al\mbox{.}}{2006}]%
        {lecun2006tutorial}
\bibfield{author}{\bibinfo{person}{Yann LeCun}, \bibinfo{person}{Sumit Chopra},
  \bibinfo{person}{Raia Hadsell}, \bibinfo{person}{M Ranzato}, {and}
  \bibinfo{person}{F Huang}.} \bibinfo{year}{2006}\natexlab{}.
\newblock \showarticletitle{A tutorial on energy-based learning}.
\newblock \bibinfo{journal}{\emph{Predicting structured data}}
  \bibinfo{volume}{1}, \bibinfo{number}{0} (\bibinfo{year}{2006}).
\newblock


\bibitem[\protect\citeauthoryear{Leung, Xiong, Lee, and Frey}{Leung
  et~al\mbox{.}}{2014}]%
        {Leung2014}
\bibfield{author}{\bibinfo{person}{Michael~K.K. Leung},
  \bibinfo{person}{Hui~Yuan Xiong}, \bibinfo{person}{Leo~J. Lee}, {and}
  \bibinfo{person}{Brendan~J. Frey}.} \bibinfo{year}{2014}\natexlab{}.
\newblock \showarticletitle{{Deep learning of the tissue-regulated splicing
  code}}.
\newblock \bibinfo{journal}{\emph{Bioinformatics}} (\bibinfo{year}{2014}).
\newblock
\showISSN{14602059}
\urldef\tempurl%
\url{https://doi.org/10.1093/bioinformatics/btu277}
\showDOI{\tempurl}


\bibitem[\protect\citeauthoryear{Louadi, Oubounyt, Tayara, and {To
  Chong}}{Louadi et~al\mbox{.}}{2019}]%
        {Louadi2019}
\bibfield{author}{\bibinfo{person}{Zakaria Louadi}, \bibinfo{person}{Mhaned
  Oubounyt}, \bibinfo{person}{Hilal Tayara}, {and} \bibinfo{person}{Kil {To
  Chong}}.} \bibinfo{year}{2019}\natexlab{}.
\newblock \showarticletitle{{Deep splicing code: Classifying alternative
  splicing events using deep learning}}.
\newblock \bibinfo{journal}{\emph{Genes}} (\bibinfo{year}{2019}).
\newblock
\showISSN{20734425}
\urldef\tempurl%
\url{https://doi.org/10.3390/genes10080587}
\showDOI{\tempurl}


\bibitem[\protect\citeauthoryear{Mortazavi, Williams, McCue, Schaeffer, and
  Wold}{Mortazavi et~al\mbox{.}}{2008}]%
        {Mortazavi2008}
\bibfield{author}{\bibinfo{person}{Ali Mortazavi}, \bibinfo{person}{Brian~A.
  Williams}, \bibinfo{person}{Kenneth McCue}, \bibinfo{person}{Lorian
  Schaeffer}, {and} \bibinfo{person}{Barbara Wold}.}
  \bibinfo{year}{2008}\natexlab{}.
\newblock \showarticletitle{{Mapping and quantifying mammalian transcriptomes
  by RNA-Seq}}.
\newblock \bibinfo{journal}{\emph{Nature Methods}} (\bibinfo{year}{2008}).
\newblock
\showISSN{15487091}
\urldef\tempurl%
\url{https://doi.org/10.1038/nmeth.1226}
\showDOI{\tempurl}


\bibitem[\protect\citeauthoryear{Nijkamp, Hill, Zhu, and Wu}{Nijkamp
  et~al\mbox{.}}{2019}]%
        {nijkamp2019learning}
\bibfield{author}{\bibinfo{person}{Erik Nijkamp}, \bibinfo{person}{Mitch Hill},
  \bibinfo{person}{Song-Chun Zhu}, {and} \bibinfo{person}{Ying~Nian Wu}.}
  \bibinfo{year}{2019}\natexlab{}.
\newblock \showarticletitle{Learning non-convergent non-persistent short-run
  MCMC toward energy-based model}. In \bibinfo{booktitle}{\emph{Advances in
  Neural Information Processing Systems}}. \bibinfo{pages}{5232--5242}.
\newblock


\bibitem[\protect\citeauthoryear{Oldham, Konopka, Iwamoto, Langfelder, Kato,
  Horvath, and Geschwind}{Oldham et~al\mbox{.}}{2008}]%
        {Oldham2008}
\bibfield{author}{\bibinfo{person}{Michael~C. Oldham},
  \bibinfo{person}{Genevieve Konopka}, \bibinfo{person}{Kazuya Iwamoto},
  \bibinfo{person}{Peter Langfelder}, \bibinfo{person}{Tadafumi Kato},
  \bibinfo{person}{Steve Horvath}, {and} \bibinfo{person}{Daniel~H.
  Geschwind}.} \bibinfo{year}{2008}\natexlab{}.
\newblock \showarticletitle{{Functional organization of the transcriptome in
  human brain}}.
\newblock \bibinfo{journal}{\emph{Nature Neuroscience}} (\bibinfo{year}{2008}).
\newblock
\showISSN{10976256}
\urldef\tempurl%
\url{https://doi.org/10.1038/nn.2207}
\showDOI{\tempurl}


\bibitem[\protect\citeauthoryear{Pertea, Lin, and Salzberg}{Pertea
  et~al\mbox{.}}{2001}]%
        {Pertea2001}
\bibfield{author}{\bibinfo{person}{M. Pertea}, \bibinfo{person}{X. Lin}, {and}
  \bibinfo{person}{S.~L. Salzberg}.} \bibinfo{year}{2001}\natexlab{}.
\newblock \showarticletitle{{GeneSplicer: A new computational method for splice
  site prediction}}.
\newblock \bibinfo{journal}{\emph{Nucleic Acids Research}}
  (\bibinfo{year}{2001}).
\newblock
\showISSN{03051048}
\urldef\tempurl%
\url{https://doi.org/10.1093/nar/29.5.1185}
\showDOI{\tempurl}


\bibitem[\protect\citeauthoryear{Robinson, McCarthy, and Smyth}{Robinson
  et~al\mbox{.}}{2009}]%
        {Robinson2009}
\bibfield{author}{\bibinfo{person}{Mark~D. Robinson}, \bibinfo{person}{Davis~J.
  McCarthy}, {and} \bibinfo{person}{Gordon~K. Smyth}.}
  \bibinfo{year}{2009}\natexlab{}.
\newblock \showarticletitle{{edgeR: A Bioconductor package for differential
  expression analysis of digital gene expression data}}.
\newblock \bibinfo{journal}{\emph{Bioinformatics}} (\bibinfo{year}{2009}).
\newblock
\showISSN{14602059}
\urldef\tempurl%
\url{https://doi.org/10.1093/bioinformatics/btp616}
\showDOI{\tempurl}


\bibitem[\protect\citeauthoryear{Sibley, Emmett, Blazquez, Faro, Haberman,
  Briese, Trabzuni, Ryten, Weale, Hardy, Modic, Curk, Wilson, Plagnol, and
  Ule}{Sibley et~al\mbox{.}}{2015}]%
        {Sibley2015}
\bibfield{author}{\bibinfo{person}{Christopher~R. Sibley},
  \bibinfo{person}{Warren Emmett}, \bibinfo{person}{Lorea Blazquez},
  \bibinfo{person}{Ana Faro}, \bibinfo{person}{Nejc Haberman},
  \bibinfo{person}{Michael Briese}, \bibinfo{person}{Daniah Trabzuni},
  \bibinfo{person}{Mina Ryten}, \bibinfo{person}{Michael~E. Weale},
  \bibinfo{person}{John Hardy}, \bibinfo{person}{Miha Modic},
  \bibinfo{person}{Toma{\v{z}} Curk}, \bibinfo{person}{Stephen~W. Wilson},
  \bibinfo{person}{Vincent Plagnol}, {and} \bibinfo{person}{Jernej Ule}.}
  \bibinfo{year}{2015}\natexlab{}.
\newblock \showarticletitle{{Recursive splicing in long vertebrate genes}}.
\newblock \bibinfo{journal}{\emph{Nature}} (\bibinfo{year}{2015}).
\newblock
\showISSN{14764687}
\urldef\tempurl%
\url{https://doi.org/10.1038/nature14466}
\showDOI{\tempurl}


\bibitem[\protect\citeauthoryear{Song and Ou}{Song and Ou}{2018}]%
        {song2018learning}
\bibfield{author}{\bibinfo{person}{Yunfu Song} {and} \bibinfo{person}{Zhijian
  Ou}.} \bibinfo{year}{2018}\natexlab{}.
\newblock \showarticletitle{Learning neural random fields with inclusive
  auxiliary generators}.
\newblock \bibinfo{journal}{\emph{arXiv preprint arXiv:1806.00271}}
  (\bibinfo{year}{2018}).
\newblock


\bibitem[\protect\citeauthoryear{Steijger, Abril, Engstr{\"{o}}m, Kokocinski,
  Akerman, Alioto, Ambrosini, Antonarakis, Behr, Bertone, Bohnert, Bucher,
  Cloonan, Derrien, Djebali, Du, Dudoit, Gerstein, Gingeras, Gonzalez,
  Grimmond, Guig{\'{o}}, Habegger, Harrow, Hubbard, Iseli, Jean, Kahles,
  Lagarde, Leng, Lefebvre, Lewis, Mortazavi, Niermann, R{\"{a}}tsch, Reymond,
  Ribeca, Richard, Rougemont, Rozowsky, Sammeth, Sboner, Schulz, Searle,
  Solorzano, Solovyev, Stanke, Stevenson, Stockinger, Valsesia, Weese, White,
  Wold, Wu, Wu, Zeller, Zerbino, and Zhang}{Steijger et~al\mbox{.}}{2013}]%
        {Steijger2013}
\bibfield{author}{\bibinfo{person}{Tamara Steijger}, \bibinfo{person}{Josep~F.
  Abril}, \bibinfo{person}{P{\"{a}}r~G. Engstr{\"{o}}m}, \bibinfo{person}{Felix
  Kokocinski}, \bibinfo{person}{Martin Akerman}, \bibinfo{person}{Tyler
  Alioto}, \bibinfo{person}{Giovanna Ambrosini}, \bibinfo{person}{Stylianos~E.
  Antonarakis}, \bibinfo{person}{Jonas Behr}, \bibinfo{person}{Paul Bertone},
  \bibinfo{person}{Regina Bohnert}, \bibinfo{person}{Philipp Bucher},
  \bibinfo{person}{Nicole Cloonan}, \bibinfo{person}{Thomas Derrien},
  \bibinfo{person}{Sarah Djebali}, \bibinfo{person}{Jiang Du},
  \bibinfo{person}{Sandrine Dudoit}, \bibinfo{person}{Mark Gerstein},
  \bibinfo{person}{Thomas~R. Gingeras}, \bibinfo{person}{David Gonzalez},
  \bibinfo{person}{Sean~M. Grimmond}, \bibinfo{person}{Roderic Guig{\'{o}}},
  \bibinfo{person}{Lukas Habegger}, \bibinfo{person}{Jennifer Harrow},
  \bibinfo{person}{Tim~J. Hubbard}, \bibinfo{person}{Christian Iseli},
  \bibinfo{person}{G{\'{e}}raldine Jean}, \bibinfo{person}{Andr{\'{e}} Kahles},
  \bibinfo{person}{Julien Lagarde}, \bibinfo{person}{Jing Leng},
  \bibinfo{person}{Gregory Lefebvre}, \bibinfo{person}{Suzanna Lewis},
  \bibinfo{person}{Ali Mortazavi}, \bibinfo{person}{Peter Niermann},
  \bibinfo{person}{Gunnar R{\"{a}}tsch}, \bibinfo{person}{Alexandre Reymond},
  \bibinfo{person}{Paolo Ribeca}, \bibinfo{person}{Hugues Richard},
  \bibinfo{person}{Jacques Rougemont}, \bibinfo{person}{Joel Rozowsky},
  \bibinfo{person}{Michael Sammeth}, \bibinfo{person}{Andrea Sboner},
  \bibinfo{person}{Marcel~H. Schulz}, \bibinfo{person}{Steven~M.J. Searle},
  \bibinfo{person}{Naryttza~Diaz Solorzano}, \bibinfo{person}{Victor Solovyev},
  \bibinfo{person}{Mario Stanke}, \bibinfo{person}{Brian~J. Stevenson},
  \bibinfo{person}{Heinz Stockinger}, \bibinfo{person}{Armand Valsesia},
  \bibinfo{person}{David Weese}, \bibinfo{person}{Simon White},
  \bibinfo{person}{Barbara~J. Wold}, \bibinfo{person}{Jie Wu},
  \bibinfo{person}{Thomas~D. Wu}, \bibinfo{person}{Georg Zeller},
  \bibinfo{person}{Daniel Zerbino}, {and} \bibinfo{person}{Michael~Q. Zhang}.}
  \bibinfo{year}{2013}\natexlab{}.
\newblock \showarticletitle{{Assessment of transcript reconstruction methods
  for RNA-seq}}.
\newblock \bibinfo{journal}{\emph{Nature Methods}} (\bibinfo{year}{2013}).
\newblock
\showISSN{15487091}
\urldef\tempurl%
\url{https://doi.org/10.1038/nmeth.2714}
\showDOI{\tempurl}


\bibitem[\protect\citeauthoryear{Suzuki, Kameyama, Ohe, Tsukahara, and
  Mayeda}{Suzuki et~al\mbox{.}}{2013}]%
        {Suzuki2013}
\bibfield{author}{\bibinfo{person}{Hitoshi Suzuki}, \bibinfo{person}{Toshiki
  Kameyama}, \bibinfo{person}{Kenji Ohe}, \bibinfo{person}{Toshifumi
  Tsukahara}, {and} \bibinfo{person}{Akila Mayeda}.}
  \bibinfo{year}{2013}\natexlab{}.
\newblock \showarticletitle{{Nested introns in an intron: Evidence of
  multi-step splicing in a large intron of the human dystrophin pre-mRNA}}.
\newblock \bibinfo{journal}{\emph{FEBS Letters}} (\bibinfo{year}{2013}).
\newblock
\showISSN{00145793}
\urldef\tempurl%
\url{https://doi.org/10.1016/j.febslet.2013.01.057}
\showDOI{\tempurl}


\bibitem[\protect\citeauthoryear{Taladriz-Sender, Campbell, and
  Burley}{Taladriz-Sender et~al\mbox{.}}{2019}]%
        {Taladriz-Sender2019}
\bibfield{author}{\bibinfo{person}{Andrea Taladriz-Sender},
  \bibinfo{person}{Emma Campbell}, {and} \bibinfo{person}{Glenn~A. Burley}.}
  \bibinfo{year}{2019}\natexlab{}.
\newblock \showarticletitle{{Splice-switching small molecules: A new
  therapeutic approach to modulate gene expression}}.
\newblock \bibinfo{journal}{\emph{Methods}} (\bibinfo{year}{2019}).
\newblock
\showISSN{10959130}
\urldef\tempurl%
\url{https://doi.org/10.1016/j.ymeth.2019.06.011}
\showDOI{\tempurl}


\bibitem[\protect\citeauthoryear{Tazi, Bakkour, and Stamm}{Tazi
  et~al\mbox{.}}{2009}]%
        {Tazi2009}
\bibfield{author}{\bibinfo{person}{Jamal Tazi}, \bibinfo{person}{Nadia
  Bakkour}, {and} \bibinfo{person}{Stefan Stamm}.}
  \bibinfo{year}{2009}\natexlab{}.
\newblock \bibinfo{title}{{Alternative splicing and disease}}.
\newblock
\newblock
\showISSN{09254439}
\urldef\tempurl%
\url{https://doi.org/10.1016/j.bbadis.2008.09.017}
\showDOI{\tempurl}


\bibitem[\protect\citeauthoryear{Wang, Sandberg, Luo, Khrebtukova, Zhang, Mayr,
  Kingsmore, Schroth, and Burge}{Wang et~al\mbox{.}}{2008}]%
        {Wang2008}
\bibfield{author}{\bibinfo{person}{Eric~T. Wang}, \bibinfo{person}{Rickard
  Sandberg}, \bibinfo{person}{Shujun Luo}, \bibinfo{person}{Irina Khrebtukova},
  \bibinfo{person}{Lu Zhang}, \bibinfo{person}{Christine Mayr},
  \bibinfo{person}{Stephen~F. Kingsmore}, \bibinfo{person}{Gary~P. Schroth},
  {and} \bibinfo{person}{Christopher~B. Burge}.}
  \bibinfo{year}{2008}\natexlab{}.
\newblock \showarticletitle{{Alternative isoform regulation in human tissue
  transcriptomes}}.
\newblock \bibinfo{journal}{\emph{Nature}} (\bibinfo{year}{2008}).
\newblock
\showISSN{00280836}
\urldef\tempurl%
\url{https://doi.org/10.1038/nature07509}
\showDOI{\tempurl}


\bibitem[\protect\citeauthoryear{Witten and Ule}{Witten and Ule}{2011}]%
        {Witten2011}
\bibfield{author}{\bibinfo{person}{Joshua~T. Witten} {and}
  \bibinfo{person}{Jernej Ule}.} \bibinfo{year}{2011}\natexlab{}.
\newblock \bibinfo{title}{{Understanding splicing regulation through RNA
  splicing maps}}.
\newblock
\newblock
\showISSN{01689525}
\urldef\tempurl%
\url{https://doi.org/10.1016/j.tig.2010.12.001}
\showDOI{\tempurl}


\bibitem[\protect\citeauthoryear{Xie, Lu, Zhu, and Wu}{Xie
  et~al\mbox{.}}{2016}]%
        {xie2016theory}
\bibfield{author}{\bibinfo{person}{Jianwen Xie}, \bibinfo{person}{Yang Lu},
  \bibinfo{person}{Song-Chun Zhu}, {and} \bibinfo{person}{Yingnian Wu}.}
  \bibinfo{year}{2016}\natexlab{}.
\newblock \showarticletitle{A theory of generative convnet}. In
  \bibinfo{booktitle}{\emph{International Conference on Machine Learning}}.
  \bibinfo{pages}{2635--2644}.
\newblock


\bibitem[\protect\citeauthoryear{Xiong, Alipanahi, Lee, Bretschneider, Merico,
  Yuen, Hua, Gueroussov, Najafabadi, Hughes, Morris, Barash, Krainer, Jojic,
  Scherer, Blencowe, and Frey}{Xiong et~al\mbox{.}}{2015}]%
        {Xiong2015}
\bibfield{author}{\bibinfo{person}{Hui~Y. Xiong}, \bibinfo{person}{Babak
  Alipanahi}, \bibinfo{person}{Leo~J. Lee}, \bibinfo{person}{Hannes
  Bretschneider}, \bibinfo{person}{Daniele Merico}, \bibinfo{person}{Ryan~K.C.
  Yuen}, \bibinfo{person}{Yimin Hua}, \bibinfo{person}{Serge Gueroussov},
  \bibinfo{person}{Hamed~S. Najafabadi}, \bibinfo{person}{Timothy~R. Hughes},
  \bibinfo{person}{Quaid Morris}, \bibinfo{person}{Yoseph Barash},
  \bibinfo{person}{Adrian~R. Krainer}, \bibinfo{person}{Nebojsa Jojic},
  \bibinfo{person}{Stephen~W. Scherer}, \bibinfo{person}{Benjamin~J. Blencowe},
  {and} \bibinfo{person}{Brendan~J. Frey}.} \bibinfo{year}{2015}\natexlab{}.
\newblock \showarticletitle{{The human splicing code reveals new insights into
  the genetic determinants of disease}}.
\newblock \bibinfo{journal}{\emph{Science}} (\bibinfo{year}{2015}).
\newblock
\showISSN{10959203}
\urldef\tempurl%
\url{https://doi.org/10.1126/science.1254806}
\showDOI{\tempurl}


\bibitem[\protect\citeauthoryear{Xiong, Barash, and Frey}{Xiong
  et~al\mbox{.}}{2011}]%
        {Xiong2011}
\bibfield{author}{\bibinfo{person}{Hui~Yuan Xiong}, \bibinfo{person}{Yoseph
  Barash}, {and} \bibinfo{person}{Brendan~J. Frey}.}
  \bibinfo{year}{2011}\natexlab{}.
\newblock \showarticletitle{{Bayesian prediction of tissue-regulated splicing
  using RNA sequence and cellular context}}.
\newblock \bibinfo{journal}{\emph{Bioinformatics}} (\bibinfo{year}{2011}).
\newblock
\showISSN{13674803}
\urldef\tempurl%
\url{https://doi.org/10.1093/bioinformatics/btr444}
\showDOI{\tempurl}


\bibitem[\protect\citeauthoryear{Yee, Pratt, Graveley, van Nostrand, and
  Yeo}{Yee et~al\mbox{.}}{2019}]%
        {Yee2019}
\bibfield{author}{\bibinfo{person}{Brian~A. Yee}, \bibinfo{person}{Gabriel~A.
  Pratt}, \bibinfo{person}{Brenton~R. Graveley}, \bibinfo{person}{Eric~L. van
  Nostrand}, {and} \bibinfo{person}{Gene~W. Yeo}.}
  \bibinfo{year}{2019}\natexlab{}.
\newblock \showarticletitle{{RBP-Maps enables robust generation of splicing
  regulatory maps}}.
\newblock \bibinfo{journal}{\emph{RNA}} (\bibinfo{year}{2019}).
\newblock
\showISSN{14699001}
\urldef\tempurl%
\url{https://doi.org/10.1261/rna.069237.118}
\showDOI{\tempurl}


\bibitem[\protect\citeauthoryear{Zerbino}{Zerbino}{2018}]%
        {Zerbino2018}
\bibfield{author}{\bibinfo{person}{Daniel R. et~al. Zerbino}.}
  \bibinfo{year}{2018}\natexlab{}.
\newblock \showarticletitle{{Ensembl 2018}}.
\newblock \bibinfo{journal}{\emph{Nucleic Acids Research}}
  (\bibinfo{year}{2018}).
\newblock
\showISSN{13624962}
\urldef\tempurl%
\url{https://doi.org/10.1093/nar/gkx1098}
\showDOI{\tempurl}


\bibitem[\protect\citeauthoryear{Zhang, Liu, MacLeod, and Liu}{Zhang
  et~al\mbox{.}}{2018}]%
        {Zhang2018}
\bibfield{author}{\bibinfo{person}{Yi Zhang}, \bibinfo{person}{Xinan Liu},
  \bibinfo{person}{James MacLeod}, {and} \bibinfo{person}{Jinze Liu}.}
  \bibinfo{year}{2018}\natexlab{}.
\newblock \showarticletitle{{Discerning novel splice junctions derived from
  RNA-seq alignment: A deep learning approach}}.
\newblock \bibinfo{journal}{\emph{BMC Genomics}} (\bibinfo{year}{2018}).
\newblock
\showISSN{14712164}
\urldef\tempurl%
\url{https://doi.org/10.1186/s12864-018-5350-1}
\showDOI{\tempurl}


\bibitem[\protect\citeauthoryear{Zhang, Pan, Ying, Xie, Adhikari, Phillips,
  Carstens, Black, Wu, and Xing}{Zhang et~al\mbox{.}}{2019}]%
        {Zhang2019}
\bibfield{author}{\bibinfo{person}{Zijun Zhang}, \bibinfo{person}{Zhicheng
  Pan}, \bibinfo{person}{Yi Ying}, \bibinfo{person}{Zhijie Xie},
  \bibinfo{person}{Samir Adhikari}, \bibinfo{person}{John Phillips},
  \bibinfo{person}{Russ~P. Carstens}, \bibinfo{person}{Douglas~L. Black},
  \bibinfo{person}{Yingnian Wu}, {and} \bibinfo{person}{Yi Xing}.}
  \bibinfo{year}{2019}\natexlab{}.
\newblock \showarticletitle{{Deep-learning augmented RNA-seq analysis of
  transcript splicing}}.
\newblock \bibinfo{journal}{\emph{Nature Methods}} (\bibinfo{year}{2019}).
\newblock
\showISSN{15487105}
\urldef\tempurl%
\url{https://doi.org/10.1038/s41592-019-0351-9}
\showDOI{\tempurl}


\bibitem[\protect\citeauthoryear{Zuallaert, Godin, Kim, Soete, Saeys, and {De
  Neve}}{Zuallaert et~al\mbox{.}}{2018}]%
        {Zuallaert2018}
\bibfield{author}{\bibinfo{person}{Jasper Zuallaert},
  \bibinfo{person}{Fr{\'{e}}deric Godin}, \bibinfo{person}{Mijung Kim},
  \bibinfo{person}{Arne Soete}, \bibinfo{person}{Yvan Saeys}, {and}
  \bibinfo{person}{Wesley {De Neve}}.} \bibinfo{year}{2018}\natexlab{}.
\newblock \showarticletitle{{Splicerover: Interpretable convolutional neural
  networks for improved splice site prediction}}.
\newblock \bibinfo{journal}{\emph{Bioinformatics}} (\bibinfo{year}{2018}).
\newblock
\showISSN{14602059}
\urldef\tempurl%
\url{https://doi.org/10.1093/bioinformatics/bty497}
\showDOI{\tempurl}


\end{thebibliography}

\appendix
\section{Appendix}

\subsection{Additional Results} \label{sec:Additional results}
% \begin{table}[!htbp]
%     \centering
%     \footnotesize
%     \caption{Performance of DCEN and baselines on Test-Chr, test samples from chromosomes (1, 3, 5, 7, and 9) different from the training set.}
%         \begin{tabular}{ lcc }
%          \hline
%          Model & Spearman Cor. & Pearson Cor. \\
%          \hline
%          SpliceAI-cls & 0.403 & 0.357 \\
%          SpliceAI-reg & 0.592 & 0.596 \\
%          SpliceAI-cls+reg & 0.593 & 0.594 \\
%          \hline
%          SpliceAI-ML & 0.587 & 0.599 \\
%          Junction-psi & 0.578 & 0.536 \\
%          \hline
%          DCEN (ours) & \textbf{0.638} & \textbf{0.678} \\
%          \hline
%         \end{tabular}
% \label{tab:test diff chromosomes}
% \end{table}

\begin{table}[!htbp]
    \centering
    % \footnotesize
    \caption{Performance of DCEN with early-stopping versus 5x training steps.}
        \begin{tabular}{ lcc }
         \hline
         Model & Spearman Cor. & Pearson Cor. \\
         \hline
         Early-Stopped & 0.623 & 0.651  \\
         5x Training steps & 0.620 & 0.650 \\
         \hline
        \end{tabular}
\label{tab:test 5x iters}
\end{table}

\begin{figure}[H]
    \centering
    \includegraphics[width=\linewidth]{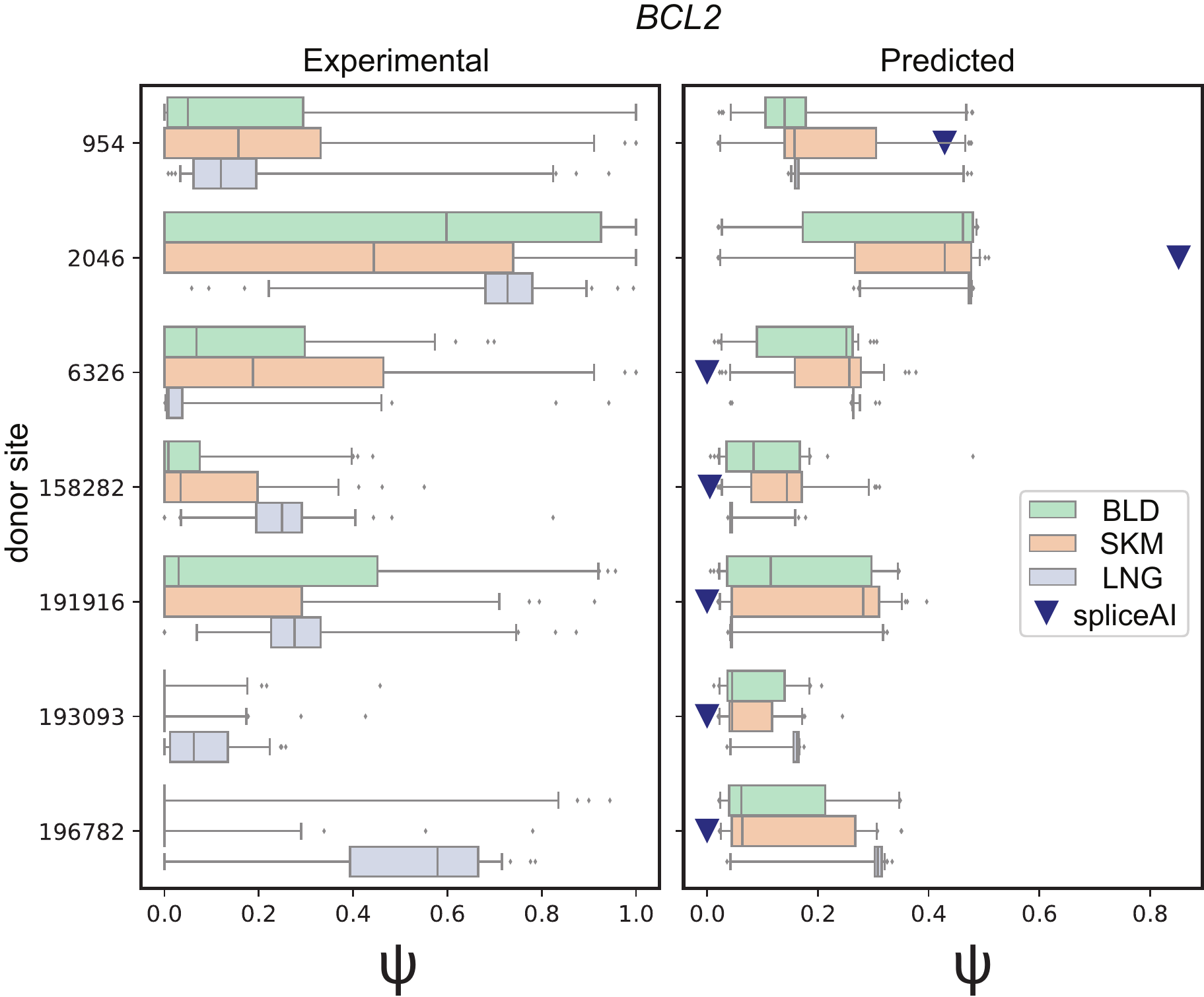}
    \caption{Comparison of the BCL2 gene donor sites' $\psi$, experimental vs predicted. Left: box plots of experimental values (grounds truths) of BCL2 donor sites in three different tissues: blood, skeletal muscle and lung, denoted as BLD, SKM and LNG respectively. Right: box plots of predicted $\psi$ for the same group of samples using DCEN; SpliceAI-cls prediction (black triangles) shown for comparison. Please note that SpliceAI-cls prediction uses sequence information only and the prediction will be the same for every sample. Donor site coordinates (y axis) are counted relative to the genomic coordinate of the first exon start of BCL2. Data from 50 samples from each tissue are shown.}
    \label{fig:BCL2don}
\end{figure}

\newpage
\subsection{Proof} \label{sec:proof}
\begin{theorem} \label{theorem:appendix energy levels to probabilities}
Given the energy levels of all the possible discrete states of a system, the probability of a particular state $T_i$ is the softmax output of its energy $E_{T_i}$ with respect to those of all other possible states in the system, i.e.,

\begin{equation} \label{eq:energy levels to probabilities}
P_i = \frac{ \exp (- E_{T_i} ) }{ \sum_{j} \exp (- E_{T_j}  ) } = \mathrm{Softmax}_i( E )
\end{equation}

\end{theorem}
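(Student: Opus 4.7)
The plan is to derive the softmax form directly from the standard Boltzmann (Gibbs) distribution, which is the defining postulate of an energy-based model. I would begin by recalling that in an energy-based model the unnormalized density over the state space is $\tilde p(T) = \exp(-E_T)$ (taking $kT = 1$ without loss of generality, since any temperature scaling can be absorbed into the learned $E$). A valid probability distribution then requires dividing by the partition function $Z = \int \exp(-E_T)\, dT$ in the continuous case, or $Z = \sum_j \exp(-E_{T_j})$ in the discrete case.

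The core observation to invoke is that the set $\{T_1, T_2, \dots\}$ of annotated transcripts for a given gene is \emph{finite}, so the state space is discrete and the partition function is a finite sum rather than an intractable integral. To make the importance-sampling angle that the paper flags explicit, I would write the partition function as an expectation under a uniform proposal $q$ over the $N$ states,
\begin{equation}
Z \;=\; \sum_{j=1}^{N} \exp(-E_{T_j}) \;=\; N\,\mathbb{E}_{T \sim q}\!\left[\exp(-E_T)\right],
\end{equation}
which shows that in the discrete setting the importance-sampling estimator of $Z$ collapses to the exact sum when all $N$ states are enumerated (no Monte Carlo error). Dividing $\exp(-E_{T_i})$ by this $Z$ immediately yields the claimed softmax expression $P_i = \exp(-E_{T_i}) / \sum_j \exp(-E_{T_j})$.

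The entire argument is essentially a definition-unpacking: the nontrivial conceptual content sits in (i) justifying the Boltzmann form as the maximum-entropy distribution consistent with a prescribed expected energy, and (ii) arguing that tractable enumeration is available here because the number of candidate transcripts per gene is small and known. I would mention (i) only briefly, either by citing the classical maximum-entropy derivation or by taking $p(T) \propto \exp(-E_T)$ as the standing definition of an EBM used elsewhere in the paper, and would spend a sentence on (ii) to motivate why DCEN can avoid the MCMC machinery that continuous-space EBMs in the related-work section require. The only real ``obstacle'' is presentational: making it clear that the $-$ sign in $\exp(-E_{T_i})$ is a convention (low energy $\Leftrightarrow$ high probability) rather than something that needs to be derived, and noting that the null state $E_{\text{null}}$ introduced in the main text simply adds one more term to the sum in $Z$ without altering the derivation.
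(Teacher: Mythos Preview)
Your proposal is correct and follows essentially the same route as the paper: start from the Boltzmann form $p(x)\propto\exp(-E(x))$, compute the partition function via importance sampling with a uniform proposal over the finite set of discrete states so that it collapses to the exact finite sum, and then divide to obtain the softmax. Your additional remarks on the sign convention, the null state, and the maximum-entropy motivation are helpful presentational touches but do not change the underlying argument, which matches the paper's.
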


\begin{proof}
From Boltzmann distribution, the probability that a system takes on a particular state ($x$) can be expressed as:

\begin{equation} \label{eq:Boltzmann distribution}
    \begin{aligned}
p_{\theta} (x) 
& = \frac{ \exp{ (- E_{\theta} (x)} ) }{ Z (\theta) } \\
& = \frac{ h(x) }{ Z (\theta) }
    \end{aligned}
\end{equation}

where 

\begin{equation} \label{eq:Boltzmann distribution}
    \begin{aligned}
Z (\theta)
& = \int \exp{ (- E_{\theta} (x) )} ~ dx \\
& = \int h(x) ~dx
    \end{aligned}
\end{equation}

is known as the partition function.

Since the probabilities of all possible states sum to 1, we have
\begin{equation}
1 = \E_{x \sim p_{\theta}} [1] = \sum_{x} \frac{ h(x) }{ Z }
\end{equation}

which gives
\begin{equation}
Z = \sum_{x} h(x) ~.
\end{equation} 

Through importance sampling with another probability distribution $q$, we can express $Z$ as

\begin{equation} 
    \begin{aligned}
Z 
& = \sum_{x} \frac{ h(x) }{ q(x) } q(x) \\
& = \frac{1}{n} \sum_{i} \frac{ h(x_i) }{ q(x_i) } ~~~,~~~~~ x_i \sim q  ~.\\
    \end{aligned}
\end{equation} 

Using an uniform discrete distribution as $q$ where all $k$ possible states ($x_i$) have the same probability $q(x_i) = (1/k)$, we get

\begin{equation} \label{eq:partition function}
    \begin{aligned}
Z 
& = \frac{1}{k} \sum_{i} \frac{ h(x_i) }{ (1/k) } \\
& = \sum_{i} h(x_i) 
    \end{aligned}
\end{equation} 

Combining Eq.~(\ref{eq:Boltzmann distribution}) and (\ref{eq:partition function}), this gives

\begin{equation} \label{eq:softmax probability}
    \begin{aligned}
p_{\theta} (x_i) 
& = \frac{ h(x_i) }{ \sum_{j} h(x_j) } \\
& = \frac{ \exp{ (- E_{\theta} (x_i)} ) }{ \sum_{j} \exp{ (- E_{\theta} (x_j)} ) } \\
& = \mathrm{Softmax}_i( E )
    \end{aligned}
\end{equation}

\end{proof}

\end{document}